\documentclass[a4paper]{article}
\usepackage{cite}
\usepackage{algorithm}
\usepackage{amsmath,amssymb,amsfonts}
\usepackage{algorithmic}
\usepackage[dvipdfmx]{graphicx}
\usepackage{color}
\usepackage{textcomp}
\usepackage{url}
\newtheorem{Theorem}{Theorem}[section]

\newtheorem{Remark}{Remark}[section]

\newtheorem{Definition}{Definition}[section]

\newtheorem{proposition}{Proposition}
\newtheorem{proof}{Proof}

\DeclareMathOperator*{\minimize}{minimize}

\definecolor{accessblue}{cmyk}{1,  0.3,  0,  0.2}
\definecolor{greycolor}{cmyk}{0, 0, 0, .8}

\begin{document}




\title{LMI Optimization Based Multirate Steady-State Kalman Filter Design}
\author{ Hiroshi Okajima}

\begin{abstract}
This paper presents an LMI-based design framework for multirate steady-state Kalman filters in systems with sensors operating at different sampling rates. The multirate system is formulated as a periodic time-varying system, where the Kalman gains converge to periodic steady-state values that repeat every frame period. Cyclic reformulation transforms this into a time-invariant problem; however, the resulting measurement noise covariance becomes semidefinite rather than positive definite, preventing direct application of standard Riccati equation methods. I address this through a dual LQR formulation with LMI optimization that naturally handles semidefinite covariances. The framework enables multi-objective design, supporting pole placement for guaranteed convergence rates and $l_2$-induced norm constraints for balancing average and worst-case performance. Numerical validation using an automotive navigation system with GPS and wheel speed sensors, including Monte Carlo simulation with 500 independent noise realizations, demonstrates that the proposed filter achieves a position RMSE well below the GPS noise level through effective multirate sensor fusion, and that the LMI solution provides valid upper bounds on the estimation error covariance.
\end{abstract}



\maketitle

\section{Introduction}\label{sec:introduction}

In many modern control and signal processing applications, state estimation must be performed using measurements from multiple sensors operating at different sampling rates. This multirate sensing environment arises naturally in systems where sensor characteristics, communication bandwidth, or computational constraints lead to varying measurement availability. Examples include industrial process control, robotics (visual and inertial sensor fusion), aerospace systems, autonomous vehicles, and networked control systems. The challenge in such scenarios is to optimally fuse measurements arriving at different rates to achieve accurate state estimation.

The Kalman filter has been extensively used for state estimation in linear systems \cite{kalman1960}. However, the classical Kalman filter formulation assumes that all measurements are available at a uniform sampling rate. When sensors operate at different rates, the system exhibits periodic time-varying dynamics in the measurement equations, even if the underlying plant dynamics are time-invariant. Naive approaches such as zero-order hold or measurement interpolation can lead to suboptimal performance and may not properly account for the uncertainty structure of the multirate measurements.

Multirate state estimation is essential across diverse engineering domains. In autonomous vehicles, GPS (1--10 Hz) and inertial/wheel speed sensors (50--100 Hz) must be fused for navigation \cite{singh2023extended,almeida2019autonomous,armesto2007egomotion}. Similar challenges arise in chemical process control where laboratory analysis and online sensors operate at different rates \cite{hellmann2024tutorial}, in power systems combining PMUs with SCADA, and in networked control systems with bandwidth-limited or event-triggered sampling \cite{fadali2002timely}. Mobile robotics applications, including SLAM with laser and encoder fusion \cite{armesto2004slam}, further demonstrate the widespread need for effective multirate estimation techniques.

Several approaches address multirate Kalman filtering. The lifting technique \cite{zhang2002multiresolution, li2008kalman, li2005kalman, armesto2006lqg} transforms multirate systems into equivalent time-invariant systems at the frame period by augmenting input and output vectors. While preserving state dimension, lifting requires matrix products such as $A^N$ and operates in open-loop between frame periods, limiting inter-sample accuracy. In contrast, the cyclic reformulation \cite{bittanti2000invariant} employed in this paper maintains the original state dimension within each block and explicitly represents periodicity through cyclic shift of state variables, enabling LMI-based optimization that naturally handles semidefinite measurement noise covariances. For systems with measurement delays, pseudo-measurement approaches \cite{nguyen2017dualrate, wang2014multirate} reconstruct current-time observations from delayed measurements and predicted states. Asynchronous filtering \cite{federated2022multirate, meng2011comparison} processes measurements upon arrival but needs careful update handling.

The theoretical foundation of periodic systems and periodic Riccati equations has been extensively studied. Bittanti, Colaneri, and their collaborators established fundamental results on discrete-time periodic Riccati equations (DPRE), including existence, uniqueness, and convergence properties of periodic solutions \cite{bittanti1988dpre, bittanti1986analysis}. Varga \cite{varga2008periodic} developed numerically reliable algorithms for solving both continuous-time and discrete-time periodic Riccati equations using periodic Schur decomposition. De Souza \cite{desouza1991periodic} addressed the periodic Riccati difference equation specifically for optimal filtering of linear periodic discrete-time systems. For systems with singular measurement noise covariance, Nikoukhah et al. \cite{nikoukhah1992descriptor} studied Kalman filtering and generalized Riccati equations in the descriptor system framework.

This paper proposes a systematic design method for multirate Kalman filters based on cyclic reformulation, building upon the periodic Kalman filtering approach developed by Fujimoto et al. \cite{fujimoto2016periodic,uehara2017multirate}. Their method transforms periodic systems into time-invariant dilated systems and enables offline computation of optimal Kalman gains through a discrete-time algebraic Riccati equation (DARE). However, the standard DARE formulation assumes positive definite measurement noise covariance, which does not hold in realistic multirate scenarios where sensors operate at truly different rates.


The main contributions of this paper are as follows. First, I identify that the cyclic measurement noise covariance $\check{R}$ is positive semidefinite but not positive definite in realistic multirate systems, and present an LMI-based approach via dual LQR formulation that handles this naturally. Second, I develop LMI-based extensions for multi-objective design, including pole placement constraints for guaranteed convergence rates and Kalman filter design with $l_2$-induced norm constraints \cite{okajima2019multirate,okajima2023multirate} for balancing average and worst-case performance. Third, I demonstrate the effectiveness of the proposed methods through automotive state estimation simulations with GPS and wheel speed sensors operating at different rates. The LMI framework also provides additional benefits: stability guarantees through the established connection between cyclic system stability and the monodromy matrix, and straightforward extension to weighted optimal estimation by modifying the objective function.

The remainder of this paper is organized as follows. Section~\ref{sec:preliminaries} reviews the standard Kalman filter formulation and periodic time-varying systems. Section~\ref{sec:design} formulates the multirate Kalman filtering problem using cyclic reformulation and identifies that the resulting measurement noise covariance is semidefinite, precluding standard DARE methods. Section~\ref{lmibased} develops the LMI-based filter design and weighted estimation. Section~\ref{sec:multi_obj} extends the framework with pole placement and $l_2$-induced norm constraints. Section~\ref{sec:numerical} demonstrates the proposed methods through numerical simulations. Finally, Section~\ref{sec:conclusion} concludes the paper.

\section{Preliminaries}\label{sec:preliminaries}

\subsection{Standard Kalman Filter}

Consider a discrete-time linear time-invariant system:
\begin{align}
x(k+1) &= Ax(k) + Bu(k) + w(k) \label{eq:system1}\\
y(k) &= Cx(k) + v(k) \label{eq:measurement1}
\end{align}
where $x(k) \in \mathbb{R}^n$ is the state vector, $u(k) \in \mathbb{R}^p$ is the control input, $y(k) \in \mathbb{R}^m$ is the measurement vector, $w(k) \in \mathbb{R}^n$ is the process noise, and $v(k) \in \mathbb{R}^m$ is the measurement noise. The process and measurement noises are assumed to be zero-mean Gaussian white noise with covariance matrices $Q$ and $R$, respectively, where $\delta_{kj}$ is the Kronecker delta:
\begin{equation}
\mathbb{E}[w(k)w(j)^T] = Q\delta_{kj}, \quad \mathbb{E}[v(k)v(j)^T] = R\delta_{kj}
\end{equation}

The discrete-time Kalman filter consists of prediction and update steps:

\textbf{Prediction:}
\begin{align}
\hat{x}(k+1|k) &= A\hat{x}(k|k) + Bu(k) \label{eq:kf_pred1}\\
P(k+1|k) &= AP(k|k)A^T + Q \label{eq:kf_pred2}
\end{align}

\textbf{Update:}
\begin{align}
K(k) &= P(k|k-1)C^T(CP(k|k-1)C^T + R)^{-1} \label{eq:kf_gain}\\
\hat{x}(k|k) &= \hat{x}(k|k-1) + K(k)(y(k) - C\hat{x}(k|k-1)) \label{eq:kf_update1}\\
P(k|k) &= (I - K(k)C)P(k|k-1) \label{eq:kf_update2}
\end{align}
where $K(k)$ is the Kalman gain matrix, $\hat{x}(k|k)$ is the estimated state given measurements up to time $k$, and $P(k|k)$ is the estimation error covariance.

For time-invariant systems, the Kalman gain converges to a steady-state value $K_{ss}$ as $k \to \infty$, which can be computed by solving the discrete-time algebraic Riccati equation (DARE):
\begin{equation}
P = APA^T - APC^T(CPC^T + R)^{-1}CPA^T + Q \label{eq:dare}
\end{equation}
The steady-state Kalman gain is then:
\begin{equation}
K_{ss} = PC^T(CPC^T + R)^{-1} \label{eq:kf_gain_ss}
\end{equation}

In this paper, I adopt the predictor form \cite{anderson1979optimal}, which is standard in observer-based control design. The filter-form gain $K_{ss}$ is converted to the predictor-form gain $L_{ss} = AK_{ss}$ by premultiplying by $A$, so that the measurement correction is incorporated into the one-step-ahead state prediction. The corresponding estimator is:
\begin{equation}
\hat{x}(k+1) = A\hat{x}(k) + Bu(k) + L_{ss}(y(k) - C\hat{x}(k)) \label{eq:predictor_form}
\end{equation}
Defining the estimation error $e(k) = x(k) - \hat{x}(k)$, the error dynamics becomes:
\begin{equation}
e(k+1) = (A - L_{ss}C)e(k) + w(k) - L_{ss}v(k) \label{eq:error_dynamics}
\end{equation}
This formulation facilitates LMI-based stability analysis through the closed-loop matrix $A - L_{ss}C$.
\subsection{Periodic Time-Varying Systems}

A discrete-time periodic system with period $N$ is characterized by:
\begin{align}
x(k+1) &= A_k x(k) + B_k u(k) + w(k) \label{eq:periodic_sys}\\
y(k) &= C_k x(k) + v(k) \label{eq:periodic_meas}
\end{align}
where $A_{k+N} = A_k$, $B_{k+N} = B_k$, and $C_{k+N} = C_k$ for all $k \geq 0$. The matrices $A_k$, $B_k$, and $C_k$ vary periodically with period $N$.

For periodic systems, the Kalman filter gains in the predictor form also become periodic. As $k \to \infty$, the gains converge to \textbf{periodic steady-state} values $L_k$, $k = 0, 1, \ldots, N-1$, satisfying:
\begin{equation}
L_{k+N} = L_k, \quad k = 0, 1, \ldots, N-1
\end{equation}
This is analogous to the steady-state gain $L_{ss}$ in time-invariant systems, but the gains now form a periodic sequence rather than a constant. Computing these periodic gains requires solving coupled Riccati equations, which is computationally intensive for large $N$. Sections~\ref{sec:design} and \ref{lmibased} show how cyclic reformulation and LMI optimization transform this into an efficiently solvable convex problem.

\section{Multirate Kalman Filter Expression}\label{sec:design}

\subsection{Problem Formulation}

Consider a linear time-invariant discrete-time system:
\begin{equation}
x(k+1) = Ax(k) + Bu(k) + Q^{1/2} d_w(k) \label{eq:system_general}
\end{equation}
where $x(k) \in \mathbb{R}^n$ is the state vector, $u(k) \in \mathbb{R}^p$ is the control input, and $d_w(k) \in \mathbb{R}^n$ is a normalized disturbance with zero mean and unit covariance. The matrix $Q^{1/2}$ is obtained from the Cholesky factorization $Q = Q^{1/2}(Q^{1/2})^T$, where $Q \succ 0$ is the process noise covariance.

Suppose the system is equipped with multiple sensors operating at different sampling rates. Let $C \in \mathbb{R}^{q \times n}$ denote the full measurement matrix, where $q$ is the total number of scalar measurements. Since not all sensors provide measurements at every sampling instant, I introduce a periodic selection matrix $S_k \in \mathbb{R}^{q \times q}$ \cite{okajima2023multirate,okajima2024design} to represent measurement availability. The measurement equation for the multirate system is:
\begin{equation}
y(k) = S_k C x(k) + S_k R^{1/2} d_v(k) \label{eq:multirate_meas_general}
\end{equation}
where $d_v(k) \in \mathbb{R}^q$ is a normalized disturbance with unit covariance, and $R^{1/2}$ is obtained from $R = R^{1/2}(R^{1/2})^T$ with $R \succ 0$ being the measurement noise covariance. The matrix $S_k$ is a diagonal matrix with entries of 0 or 1:
\begin{equation}
S_k = \mathrm{diag}(s_{k,1}, s_{k,2}, \ldots, s_{k,q})
\end{equation}
where $s_{k,i} = 1$ if the $i$-th measurement is available at time $k$, and $s_{k,i} = 0$ otherwise.

If the sampling rates of different sensors have rational ratios, there exists a period $N$ such that:
\begin{equation}
S_{k+N} = S_k, \quad \forall k \geq 0
\end{equation}

The multirate Kalman filtering problem is to design optimal periodic Kalman gains $L_k$, $k = 0, 1, \ldots, N-1$, for the predictor-form estimator:
\begin{equation}
\hat{x}(k+1) = A\hat{x}(k) + Bu(k) + L_k(y(k) - S_kC\hat{x}(k)) \label{eq:multirate_estimator_general}
\end{equation}
where $\hat{x}(k)$ denotes $\hat{x}(k|k-1)$ as introduced in Section~\ref{sec:preliminaries}. The objective is to minimize the estimation error $e(k) = x(k) - \hat{x}(k)$ in the mean square sense.

\subsection{Cyclic Reformulation}

To transform the periodic system (\ref{eq:system_general})--(\ref{eq:multirate_meas_general}) into a time-invariant form, I apply cyclic reformulation \cite{bittanti2000invariant}, following its extension to multirate systems in \cite{okajima2019multirate,okajima2024design}. For a signal $\xi(k)$, define its cycled representation $\check{\xi}(k)$ by placing the signal in the $(k \bmod N + 1)$-th block position with zeros elsewhere. Specifically, for $k \bmod N = j$ $(j = 0, 1, \ldots, N-1)$:
\begin{equation}
\check{\xi}(k) = \begin{bmatrix} O \\ \vdots \\ O \\ \xi(k) \\ O \\ \vdots \\ O \end{bmatrix} \leftarrow (j+1)\text{-th block}
\end{equation}

For example, the cycled input $\check{u}(k) \in \mathbb{R}^{Np}$ is defined as:
\begin{align}
\check{u}(0) &= \begin{bmatrix} u(0) \\ O \\ \vdots \\ O \end{bmatrix}, \quad 
\check{u}(1) = \begin{bmatrix} O \\ u(1) \\ \vdots \\ O \end{bmatrix}, \quad \cdots, \notag \\
&\check{u}(N-1) = \begin{bmatrix} O \\ \vdots \\ O \\ u(N-1) \end{bmatrix}, \quad \cdots
\end{align}
This cycled representation is applied to all signals: $\check{x}(k) \in \mathbb{R}^{Nn}$ for the state, $\check{u}(k) \in \mathbb{R}^{Np}$ for the input, $\check{y}(k) \in \mathbb{R}^{Nq}$ for the output, $\check{d}_w(k) \in \mathbb{R}^{Nn}$ for the normalized process disturbance, and $\check{d}_v(k) \in \mathbb{R}^{Nq}$ for the normalized measurement disturbance. The cycled state dynamics is given by:
\begin{equation}
\check{x}(k+1) = \check{A}\check{x}(k) + \check{B}\check{u}(k) + \check{Q}^{1/2} \check{d}_w(k) \label{eq:cyclic_dynamics}
\end{equation}
where the cyclic system matrix $\check{A} \in \mathbb{R}^{Nn \times Nn}$ has the structure:
\begin{equation}
\check{A} = \begin{bmatrix}
0 & 0 & \cdots & 0 & A\\
A & 0 & \cdots & 0 & 0\\
0 & A & \cdots & 0 & 0\\
\vdots & \vdots & \ddots & \vdots & \vdots\\
0 & 0 & \cdots & A & 0
\end{bmatrix} \label{eq:cyclic_A}
\end{equation}

Similarly, $\check{B}$ and $\check{Q}^{1/2}$ follow the same cyclic pattern:
\begin{eqnarray}
\check{B} &=& \begin{bmatrix}
0 & 0 & \cdots & 0 & B\\
B & 0 & \cdots & 0 & 0\\
0 & B & \cdots & 0 & 0\\
\vdots & \vdots & \ddots & \vdots & \vdots\\
0 & 0 & \cdots & B & 0
\end{bmatrix}, \\ 
\check{Q}^{1/2} &=& \begin{bmatrix}
0 & 0 & \cdots & 0 & Q^{1/2}\\
Q^{1/2} & 0 & \cdots & 0 & 0\\
0 & Q^{1/2} & \cdots & 0 & 0\\
\vdots & & \ddots & & \vdots\\
0 & 0 & \cdots & Q^{1/2} & 0
\end{bmatrix} \label{eq:cyclic_Q}
\end{eqnarray}

The cycled measurement equation is:
\begin{equation}
\check{y}(k) = \check{C}\check{x}(k) + \check{R}^{1/2} \check{d}_v(k) \label{eq:cyclic_measurement}
\end{equation}
where $\check{C} \in \mathbb{R}^{Nq \times Nn}$ and $\check{R}^{1/2} \in \mathbb{R}^{Nq \times Nq}$ are block-diagonal matrices:
\begin{align}
\check{C} &= \mathrm{diag}(S_0C, S_1C, \ldots, S_{N-1}C) \label{eq:cyclic_C} \\
\check{R}^{1/2} &= \mathrm{diag}(S_0 R^{1/2}, S_1 R^{1/2}, \ldots, S_{N-1} R^{1/2}) \label{eq:cyclic_R}
\end{align}

\begin{proposition}[\cite{bittanti2000invariant}]
The cyclic reformulation $(\check{A}, \check{B}, \check{C},$ $\check{Q}^{1/2}, \check{R}^{1/2})$ is equivalent to the original periodic system $(A, B, S_k C)$ with noise covariances $(Q, S_k R S_k^T)$ in the sense that if $\check{x}(0) = [x(0)^T, 0, \ldots, 0]^T$, then the non-zero block of $\check{x}(k)$ always corresponds to $x(k)$ for all $k \geq 0$.
\end{proposition}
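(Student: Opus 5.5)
The plan is an induction on $k$ for the claim that $\check{x}(k)$ has its only (possibly) nonzero block in position $(k \bmod N)+1$, and that this block equals $x(k)$.

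\textbf{Block action of $\check{A}$.} First I record how the cyclic matrices act on a cycled vector. From (\ref{eq:cyclic_A}), block row $i+1$ of $\check{A}$ (for $i=1,\ldots,N-1$) has its only nonzero entry $A$ in block column $i$, and block row $1$ has $A$ in block column $N$. So if $\check{\xi}$ is supported only on block $j+1$ with content $\xi$, then $\check{A}\check{\xi}$ is supported only on block $((j+1)\bmod N)+1$, with content $A\xi$. The matrices $\check{B}$ and $\check{Q}^{1/2}$ have the same cyclic pattern, so the same statement holds with $A$ replaced by $B$ and $Q^{1/2}$. This shift lemma is the key step. The only delicate point is the wrap-around from $j=N-1$ back to block $1$, which is supplied by the upper-right block of each cyclic matrix.

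\textbf{Induction.} The base case $k=0$ is the assumption $\check{x}(0)=[x(0)^T,0,\ldots,0]^T$. For the inductive step, suppose $\check{x}(k)$ is the cycled version of $x(k)$ with $j=k\bmod N$. By the definition of cycling, $\check{u}(k)$ and $\check{d}_w(k)$ are supported on the same block $j+1$, with contents $u(k)$ and $d_w(k)$. Applying the shift lemma to each term of (\ref{eq:cyclic_dynamics}), all three terms land in block $((j+1)\bmod N)+1=((k+1)\bmod N)+1$. Their sum there is $Ax(k)+Bu(k)+Q^{1/2}d_w(k)=x(k+1)$ by (\ref{eq:system_general}), and every other block is zero. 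Hence $\check{x}(k+1)$ is the cycled version of $x(k+1)$.

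\textbf{Measurement equation.} Since $\check{C}$ and $\check{R}^{1/2}$ are block diagonal, they preserve block position. For $k\bmod N=j$, the vector $\check{C}\check{x}(k)+\check{R}^{1/2}\check{d}_v(k)$ is supported on block $j+1$ with content $S_jCx(k)+S_jR^{1/2}d_v(k)$. Because $S_{k}=S_j$ by periodicity, this equals $y(k)$ from (\ref{eq:multirate_meas_general}), so $\check{y}(k)$ is exactly the cycled output.

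\textbf{Noise covariances.} The nonzero block of the cycled measurement noise is $S_kR^{1/2}d_v(k)$. Its covariance is $S_kR^{1/2}(R^{1/2})^TS_k^T=S_kRS_k^T$, using that $d_v$ has unit covariance. The process noise block is $Q^{1/2}d_w(k)$, with covariance $Q$. These are the covariances $(Q,S_kRS_k^T)$ appearing in the statement, which completes the equivalence.
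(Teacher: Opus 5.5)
Your proof is correct and complete. The block-shift property of $\check{A}$, $\check{B}$ and $\check{Q}^{1/2}$ (including the wrap-around through the upper-right block), induction on $k$, and the position-preserving action of the block-diagonal $\check{C}$ and $\check{R}^{1/2}$ give exactly the claimed correspondence, together with the stated noise covariances $(Q, S_kRS_k^T)$. The paper gives no proof of its own and only cites Bittanti--Colaneri, so there is nothing to compare against; your direct induction is the natural verification of the claim.
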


The cyclic noise covariances are given by:
\begin{align}
\check{Q} &= \check{Q}^{1/2} (\check{Q}^{1/2})^T = \mathrm{diag}(Q, Q, \ldots, Q) \label{eq:Q_cyclic_cov}\\
\check{R} &= \check{R}^{1/2} (\check{R}^{1/2})^T \nonumber \\ &= \mathrm{diag}(S_0 R S_0^T, S_1 R S_1^T, \ldots, S_{N-1} R S_{N-1}^T) \label{eq:R_cyclic_cov}
\end{align}

\begin{Remark}[Critical Observation]
The measurement noise covariance $\check{R}$ is only positive semidefinite ($\check{R} \succeq 0$), not positive definite. This is because at time instants when certain sensors do not provide measurements (i.e., $S_k$ has zero rows), the corresponding blocks $S_k R S_k^T$ are singular or have reduced rank.
\end{Remark}

For example, in the automotive navigation system:
\begin{itemize}
\item When GPS is available ($k \bmod N = 0$): $S_0 = I_q$ $\Rightarrow$ $S_0 R S_0^T = R \succ 0$ (full rank)
\item When only wheel speed is available ($k \bmod N \neq 0$): $S_k = \mathrm{diag}(0, 1)$ $\Rightarrow$ $S_k R S_k^T$ has reduced rank
\end{itemize}

Therefore, $\check{R} \succeq 0$ but $\check{R} \not\succ 0$. This semidefinite structure prevents the direct application of standard DARE methods and necessitates an LMI-based approach.

\subsection{Observability Analysis}

The observability of the cyclic system is crucial for the existence of a stable Kalman filter. The observability matrix for the $Nn$-dimensional cyclic system $(\check{A}, \check{C})$ is:
\begin{equation}
\mathcal{O} = \begin{bmatrix}
\check{C}\\
\check{C}\check{A}\\
\check{C}\check{A}^2\\
\vdots\\
\check{C}\check{A}^{Nn-1}
\end{bmatrix}
\end{equation}

\begin{proposition}[\cite{bittanti2000invariant,okajima2024design}]\label{prop:observability}
The cyclic system $(\check{A}, \check{C})$ is observable if and only if $\mathrm{rank}(\mathcal{O}) = Nn$. Since the system matrices $A$, $S_k$, and $C$ are assumed known, this condition can be verified directly by computing the rank of $\mathcal{O}$.
\end{proposition}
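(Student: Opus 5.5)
The statement is the standard Kalman rank test applied to the pair $(\check{A},\check{C})$, which is a time-invariant system of state dimension $Nn$. The plan is to invoke the classical Cayley--Hamilton argument rather than anything specific to periodic systems. We do not need the equivalence in the earlier Proposition for this direction. The only point is that once the periodic system has been rewritten in cyclic form, observability is a purely linear-algebraic property of two constant matrices.

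\textbf{Sufficiency.} Suppose $\mathrm{rank}(\mathcal{O}) = Nn$. Consider the unforced, noise-free cyclic system $\check{x}(k+1)=\check{A}\check{x}(k)$, $\check{y}(k)=\check{C}\check{x}(k)$. Stacking the outputs for $k=0,\ldots,Nn-1$ gives $[\check{y}(0)^T,\ldots,\check{y}(Nn-1)^T]^T = \mathcal{O}\check{x}(0)$. Full column rank of $\mathcal{O}$ means $\check{x}(0)$ is uniquely determined, as $(\mathcal{O}^T\mathcal{O})^{-1}\mathcal{O}^T$ applied to the stacked output. This is observability.

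\textbf{Necessity.} Suppose the rank is deficient. Then there is $z\neq 0$ with $\check{C}\check{A}^i z=0$ for $i=0,\ldots,Nn-1$. By Cayley--Hamilton applied to the $Nn\times Nn$ matrix $\check{A}$, every power $\check{A}^i$ with $i\ge Nn$ is a linear combination of $I,\check{A},\ldots,\check{A}^{Nn-1}$. Hence $\check{C}\check{A}^i z=0$ for all $i\ge 0$. The initial states $\check{x}(0)$ and $\check{x}(0)+z$ therefore produce identical outputs for all time, so the system is not observable.

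The second sentence of the statement is immediate. $\check{A}$ and $\check{C}$ are built explicitly from the known $A$, $C$ and $S_k$ via (\ref{eq:cyclic_A}) and (\ref{eq:cyclic_C}), so $\mathcal{O}$ is a computable finite matrix and its rank can be checked numerically.

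\textbf{Main subtlety.} The only step needing care is fixing the notion of observability used. It should be the standard one for the LTI pair $(\check{A},\check{C})$ with an arbitrary initial state in $\mathbb{R}^{Nn}$. It should not be restricted to the structured initial states $[x(0)^T,0,\ldots,0]^T$ of the earlier Proposition. With this definition the result is literally the textbook theorem, and the citation to \cite{bittanti2000invariant,okajima2024design} covers the link to the periodic system's observability.
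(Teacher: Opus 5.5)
Your proof is correct, and it is the same argument the paper relies on. The paper gives no separate proof; it cites the literature and treats the statement as the standard Kalman rank test for the time-invariant pair $(\check{A},\check{C})$ of dimension $Nn$, which your left-inverse argument (sufficiency) and Cayley--Hamilton argument (necessity) establish. Your point that observability must be taken for arbitrary initial states in $\mathbb{R}^{Nn}$, and not only for the structured states $[x(0)^T,0,\ldots,0]^T$, is the right reading of the statement.
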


This is a collective observability condition over one complete period of $N$ time steps: the combination of all measurements collected over one period must provide sufficient information to reconstruct the full state. Individual time steps need not be independently observable.

\begin{Remark}[Sufficient Conditions and Detectability]\label{rem:obs_sufficient}
A sufficient condition for the observability of $(\check{A}, \check{C})$ is that there exists some $j \in \{0, \ldots, N-1\}$ such that the pair $(S_j C, A^N)$ is observable and $A$ is nonsingular \cite{okajima2025jrm}. In this case, the nonsingularity of $A$ propagates the observability at a single measurement time instant to all block positions in the cyclic observability matrix, establishing $\mathrm{rank}(\mathcal{O}) = Nn$. For the stability of the Kalman filter, the weaker condition of detectability of $(\check{A}, \check{C})$ is sufficient: all unobservable modes of the cyclic system must be Schur stable. In particular, when $A$ is Schur stable, detectability is automatically satisfied regardless of the measurement pattern.
\end{Remark}

This result ensures that the cyclic reformulation preserves the observability properties of the original multirate system, which is essential for the well-posedness of the Kalman filtering problem.

\subsection{Challenges with Standard DARE Approach}

The key advantage of cyclic reformulation is that it transforms the periodic Kalman filtering problem into a time-invariant problem. Following the approach of Fujimoto et al. \cite{fujimoto2016periodic}, one might attempt to compute the optimal cyclic Kalman gain $\check{L}$ by solving the standard DARE:
\begin{equation}
\check{P} = \check{A}\check{P}\check{A}^T - \check{A}\check{P}\check{C}^T(\check{C}\check{P}\check{C}^T + \check{R})^{-1}\check{C}\check{P}\check{A}^T + \check{Q} \label{eq:cyclic_dare}
\end{equation}
with the Kalman gain:
\begin{equation}
\check{L} = \check{A}\check{P}\check{C}^T(\check{C}\check{P}\check{C}^T + \check{R})^{-1} \label{eq:cyclic_gain}
\end{equation}

\begin{Remark}
The standard DARE formulation (\ref{eq:cyclic_dare}) requires $\check{R} \succ 0$ to ensure that the inverse $(\check{C}\check{P}\check{C}^T + \check{R})^{-1}$ exists and is well-conditioned. In realistic multirate systems where sensors operate at different rates, the cyclic covariance $\check{R}$ is only positive semidefinite. When $\check{R}$ is singular, standard DARE solvers may fail to converge or produce numerically unstable solutions. Note that Fujimoto et al. \cite{fujimoto2016periodic} considered systems with $\check{R} \succ 0$; the semidefinite case necessitates the LMI-based approach developed in Section~\ref{lmibased}.
\end{Remark}

\subsection{Stability Analysis}

The closed-loop estimation error dynamics for the cyclic system is given by:
\begin{equation}
\check{e}(k+1) = (\check{A} - \check{L}\check{C})\check{e}(k) + \check{Q}^{1/2}\check{d}_w(k) - \check{L}\check{R}^{1/2}\check{d}_v(k)
\end{equation}

The stability of the multirate Kalman filter is determined by the spectral radius of the closed-loop matrix $\check{A} - \check{L}\check{C}$. A fundamental connection exists between the cyclic system stability and the monodromy matrix of the original periodic system.

\begin{Definition}[Monodromy Matrix]
For the periodic estimation error system with gains $L_k$, $k = 0, 1, \ldots, N-1$, the monodromy matrix is defined as:
\begin{equation}
\Phi_N = \prod_{k=0}^{N-1} (A - L_{N-1-k} S_{N-1-k} C) \label{eq:monodromy}
\end{equation}
\end{Definition}

\begin{Theorem}[Equivalence of Stability Conditions \cite{bittanti2000invariant, bittanti1986analysis}]\label{stabtheo}
The following statements are equivalent:
\begin{enumerate}
\item The cyclic closed-loop matrix $\check{A} - \check{L}\check{C}$ is Schur stable.
\item The monodromy matrix $\Phi_N$ is Schur stable.
\item The estimation error covariance remains bounded for all time.
\end{enumerate}
\end{Theorem}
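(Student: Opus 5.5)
Our approach is to show that $(\check{A}-\check{L}\check{C})^N$ is block diagonal with blocks that are cyclic rotations of the monodromy product, and then to relate boundedness of the error covariance to the periodic error recursion. Here the cyclic gain is $\check{L}$, with block structure mirroring $\check{A}$: its $(j+2,j+1)$ block is $L_j$ (indices mod $N$), so that $\check{L}\check{C}$ has the same cyclic sparsity as $\check{A}$. Write $M_j = A - L_j S_j C$. Then $\check{A}-\check{L}\check{C}$ is the block cyclic-shift matrix whose $(j+2,j+1)$ block is $M_j$, with the top-right block equal to $M_{N-1}$.

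\textbf{Step 1 ($1\Leftrightarrow 2$).} We compute the $N$-th power of this block cyclic matrix. Each block travels once around the cycle, so $(\check{A}-\check{L}\check{C})^N = \mathrm{diag}(\Psi_0,\ldots,\Psi_{N-1})$. Here
\begin{equation*}
\Psi_j = M_{j-1}M_{j-2}\cdots M_{j}
\end{equation*}
is the product of all $N$ factors starting at index $j$ (indices mod $N$), and $\Psi_0=\Phi_N$ as in (\ref{eq:monodromy}). All $\Psi_j$ are cyclic rotations of the same product. Using $\mathrm{spec}(XY)\setminus\{0\}=\mathrm{spec}(YX)\setminus\{0\}$, they share their nonzero eigenvalues with $\Phi_N$. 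Hence $\rho((\check{A}-\check{L}\check{C})^N)=\rho(\Phi_N)$, and since $\rho(X^N)=\rho(X)^N$, Schur stability of the cyclic matrix is equivalent to Schur stability of $\Phi_N$. One could sharpen this to the eigenvalues of $\check{A}-\check{L}\check{C}$ being the $N$-th roots of those of $\Phi_N$, but this is not needed.

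\textbf{Step 2 ($2\Rightarrow 3$).} The periodic error obeys $e(k+1)=M_k e(k)+Q^{1/2}d_w(k)-L_kS_kR^{1/2}d_v(k)$. The covariance therefore satisfies the periodic Lyapunov recursion
\begin{equation*}
P(k+1)=M_kP(k)M_k^T+Q+L_kS_kRS_kL_k^T.
\end{equation*}
Iterating over one frame gives $P(k+N)=\Psi P(k)\Psi^T+W_k$, where $\Psi$ is the appropriate rotation of $\Phi_N$ and $W_k$ is a bounded periodic term. If $\rho(\Phi_N)<1$, then $\|\Psi^m\|\le c\gamma^m$ with $\gamma<1$, so the geometric series bounds $P(k)$ uniformly. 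The intermediate steps within a frame are covered by finitely many bounded maps.

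\textbf{Step 3 ($3\Rightarrow 2$), the main obstacle.} Boundedness "for all time" must be read as holding for every initial covariance $P(0)\succeq 0$. Otherwise the implication fails: with zero noise and $P(0)=0$, $P$ stays bounded regardless of stability. I will adopt this reading. Taking $P(0)=vv^*$-type initial covariances (via real and imaginary parts) for an eigenvector $v$ of $\Phi_N^T$ with $|\lambda|\ge 1$, we get $v^*P(mN)v\ge |\lambda|^{2m}v^*P(0)v$, because $W\succeq 0$. This is nonincreasing in neither direction, so it does not by itself give a contradiction. Instead I will use $Q\succ 0$: $v^*P(mN)v\ge \sum_{i<m}|\lambda|^{2i}v^*Qv\to\infty$ when $|\lambda|\ge1$, which contradicts boundedness. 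Hence $\rho(\Phi_N)<1$. The delicate point is exactly this use of $Q\succ 0$ (or, alternatively, of arbitrary initial covariances) to exclude marginal modes, and I will state the interpretation explicitly. Finally, by the Proposition on equivalence of the cyclic reformulation, the same conclusions transfer to the cyclic error $\check{e}$.
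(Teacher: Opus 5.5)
Your proof is correct and follows the paper's route. For (i)$\Leftrightarrow$(ii) you use the spectral-radius identity $\rho(\check{A}-\check{L}\check{C})^N=\rho(\Phi_N)$, which you justify via the block-diagonal $N$-th power rather than asserting the eigenvalue correspondence. For (ii)$\Leftrightarrow$(iii) you use the frame-lifted periodic Lyapunov recursion, usefully making explicit the cyclic structure of $\check{L}$ and the role of $Q\succ 0$; the bound there needs $W_0\succeq Q$ (not just $W\succeq 0$), which holds because the last step of the frame contributes $Q+L_{N-1}S_{N-1}RS_{N-1}L_{N-1}^T$.

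One correction: $Q\succ 0$ is a standing assumption, so your Step~3 already works for any single $P(0)\succeq 0$ (even $P(0)=0$), and the ``every initial covariance'' reading is unnecessary. Your parenthetical claim that arbitrary initial covariances could alternatively exclude marginal modes is false: with zero noise, $A=I$ and $L_k=0$, the covariance stays equal to $P(0)$ for every $P(0)$, yet $\Phi_N=I$ is not Schur stable.
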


\begin{proof}
The equivalence between (i) and (ii) follows from the spectral mapping property of cyclic systems. Specifically, if $\lambda$ is an eigenvalue of $\Phi_N$, then $\lambda^{1/N} e^{j2\pi m/N}$ for $m = 0, 1, \ldots, N-1$ are eigenvalues of $\check{A} - \check{L}\check{C}$. Consequently:
\begin{equation}
\rho(\check{A} - \check{L}\check{C}) = \rho(\Phi_N)^{1/N}
\end{equation}
where $\rho(\cdot)$ denotes the spectral radius. Thus $\rho(\Phi_N) < 1$ if and only if $\rho(\check{A} - \check{L}\check{C}) < 1$. The equivalence between (ii) and (iii) follows from periodic Lyapunov theory: the periodic Lyapunov equation for the error covariance has a bounded positive definite solution if and only if $\Phi_N$ is Schur stable.
\end{proof}

Theorem~\ref{stabtheo} also guarantees the existence of periodic steady-state Kalman gains \cite{desouza1991periodic}: when $\Phi_N$ is Schur stable, the periodic Riccati difference equation has a unique periodic positive definite solution, and the time-varying Kalman gains converge to periodic steady-state values $L_k$ as $k \to \infty$. This justifies the offline computation of periodic gains.

The complete design procedure, including the stability guarantee of the resulting filter, is presented in Section~\ref{lmibased}.

\section{LMI-based Filter Design for Multirate Systems} \label{lmibased}
\subsection{LMI-based Filter Design}

As established in Section \ref{sec:design}, multirate systems with intermittent sensor measurements lead to a semidefinite measurement noise covariance $\check{R} \succeq 0$ (not positive definite). The LMI-based approach offers several advantages including natural handling of semidefinite covariances, easier incorporation of robustness constraints, and guaranteed numerical stability.

\subsubsection{Error Dynamics}

The estimation error $\check{e}(k) = \check{x}(k) - \check{\hat{x}}(k)$ satisfies:
\begin{equation}
\check{e}(k+1) = (\check{A} - \check{L}\check{C})\check{e}(k) + \check{Q}^{1/2}\check{d}_w(k) - \check{L}\check{R}^{1/2}\check{d}_v(k) \label{eq:error_dyn_cyclic}
\end{equation}
For any stabilizing gain $\check{L}$, the steady-state error covariance $\check{P}_e = \lim_{k\to\infty}\mathbb{E}[\check{e}(k)\check{e}(k)^T]$ satisfies the discrete Lyapunov equation (assuming $\mathbb{E}[\check{d}_w \check{d}_v^T] = 0$):
\begin{equation}
\check{P}_e = (\check{A} - \check{L}\check{C})\check{P}_e(\check{A} - \check{L}\check{C})^T + \check{Q} + \check{L}\check{R}\check{L}^T \label{eq:lyap_cov}
\end{equation}
This equation holds for an arbitrary stabilizing $\check{L}$; when the optimal gain (\ref{eq:cyclic_gain}) is substituted, $\check{P}_e$ coincides with the DARE solution $\check{P}$ in (\ref{eq:cyclic_dare}). Note that, unlike the DARE (\ref{eq:cyclic_dare}), equation (\ref{eq:lyap_cov}) does not involve the inverse of $\check{C}\check{P}\check{C}^T + \check{R}$ and is therefore well-defined even when $\check{R}$ is only positive semidefinite.

By the duality between Kalman filtering and LQR \cite{anderson1979optimal}, the optimal gain $\check{L}$ for (\ref{eq:lyap_cov}) equals $\check{F}^T$, where $\check{F}$ is the LQR gain for the dual system $\check{\zeta}(k+1) = \check{A}^T \check{\zeta}(k) + \check{C}^T \check{u}(k)$ with state weight $\check{Q}$ and input weight $\check{R}$. This duality enables the application of standard LQR-LMI methods \cite{boyd1994lmi}, as developed below.

\subsubsection{LMI Formulation}

Introduce the upper bound variable $\check{X} \succ 0$ satisfying $\check{P}_e \preceq \check{X}^{-1}$. Replacing $\check{P}_e$ with $\check{X}^{-1}$ on both sides of (\ref{eq:lyap_cov}) gives the sufficient condition for maintaining this bound:
\begin{multline}
\check{X}^{-1} \succeq (\check{A} - \check{L}\check{C})\check{X}^{-1}(\check{A} - \check{L}\check{C})^T \\
+ \check{Q}^{1/2}(\check{Q}^{1/2})^T + \check{L}\check{R}^{1/2}(\check{R}^{1/2})^T\check{L}^T \label{eq:lyap_ineq}
\end{multline}
Define the change of variables $\check{Y} = -\check{X}\check{L}$. Pre- and post-multiplying (\ref{eq:lyap_ineq}) by $\check{X}$ and using $\check{X}(\check{A} - \check{L}\check{C}) = \check{X}\check{A} + \check{Y}\check{C}$:
\begin{multline}
\check{X} \succeq (\check{X}\check{A} + \check{Y}\check{C})\check{X}^{-1}(\check{X}\check{A} + \check{Y}\check{C})^T \\
+ \check{X}\check{Q}^{1/2}(\check{Q}^{1/2})^T\check{X} + \check{Y}\check{R}^{1/2}(\check{R}^{1/2})^T\check{Y}^T \label{eq:congruence}
\end{multline}
The right-hand side of (\ref{eq:congruence}) consists of three terms, each of the form $\Phi_i M_i^{-1} \Phi_i^T$ with $M_1 = \check{X}$, $M_2 = I_{Nn}$, $M_3 = I_{Nq}$. Applying the Schur complement to each term simultaneously, the matrix inequality (\ref{eq:congruence}) is equivalent to the following block matrix inequality.

\textbf{LMI \#1: Stability and Performance}
\begin{equation}
\begin{bmatrix}
\check{X} & \check{X}\check{A} + \check{Y}\check{C} & \check{X} \check{Q}^{1/2} & \check{Y} \check{R}^{1/2}\\
(\check{X}\check{A} + \check{Y}\check{C})^T & \check{X} & 0 & 0\\  
(\check{Q}^{1/2})^T \check{X} & 0 & I_{Nn} & 0\\
  (\check{R}^{1/2})^T \check{Y}^T & 0 & 0 & I_{Nq}
\end{bmatrix} \succeq 0 \label{eq:lmi_main}
\end{equation}

In this formulation, $\check{L}$ is absorbed into the decision variable $\check{Y} = -\check{X}\check{L}$, so that (\ref{eq:lmi_main}) is linear in $(\check{X}, \check{Y})$. The gain $\check{L}$ is treated as a free design variable, and minimizing an upper bound on trace$(\check{P}_e)$ (see LMI \#3 below) replaces the direct minimization of trace$(\check{P}_e)$ via the DARE. Since the LMI (\ref{eq:lmi_main}) never requires the inversion of $\check{R}$, it naturally accommodates semidefinite $\check{R}$.

Each block of (\ref{eq:lmi_main}) corresponds to a term in the error dynamics (\ref{eq:error_dyn_cyclic}): the (1,2) block $\check{X}\check{A} + \check{Y}\check{C}$ encodes the closed-loop dynamics $\check{A} - \check{L}\check{C}$; the (1,3) block $\check{X}\check{Q}^{1/2}$ corresponds to the process noise; and the (1,4) block $\check{Y}\check{R}^{1/2}$ corresponds to the measurement noise. The diagonal blocks $I_{Nn}$ and $I_{Nq}$ at positions (3,3) and (4,4) arise from the unit-covariance normalization of $\check{d}_w$ and $\check{d}_v$.


\textbf{LMI \#2: Positive Definiteness}
\begin{equation}
\check{X} \succeq \epsilon I_{Nn}, \quad \epsilon > 0 \label{eq:lmi_stability}
\end{equation}

\textbf{LMI \#3: Covariance Upper Bound}

To minimize trace$(\check{P}_e)$ where $\check{P}_e$ is the steady-state error covariance satisfying $\check{P}_e \preceq \check{X}^{-1}$, I introduce an auxiliary variable $\check{W} \in \mathbb{R}^{Nn \times Nn}$ and impose:
\begin{equation}
\begin{bmatrix}
\check{W} & I_{Nn}\\
I_{Nn} & \check{X}
\end{bmatrix} \succeq 0 \label{eq:lmi_bound}
\end{equation}
This constraint ensures $\check{W} \succeq \check{X}^{-1} \succeq \check{P}_e$. Therefore, minimizing trace$(\check{W})$ provides an upper bound minimization for trace$(\check{P}_e)$.

\subsubsection{Optimization Problem}

The complete LMI-based filter design problem is:
\begin{equation}
\begin{aligned}
\minimize_{\check{X}, \check{Y}, \check{W}} \quad & \text{trace}(\check{W}) \\
\text{subject to} \quad & \text{LMI (\ref{eq:lmi_main}), (\ref{eq:lmi_stability}), (\ref{eq:lmi_bound})}
\end{aligned}
\label{eq:lmi_optimization}
\end{equation}
This is a standard semidefinite programming (SDP) problem that can be efficiently solved using interior-point methods.

\subsubsection{Kalman Gain Extraction}

After solving the LMI optimization, the Kalman gain is recovered from the decision variables:
\begin{equation}
\check{L} = -\check{X}^{-1} \check{Y} \label{eq:L_recovery}
\end{equation}
The cyclic gain $\check{L} \in \mathbb{R}^{Nn \times Nq}$ must be decomposed into periodic gains $L_k \in \mathbb{R}^{n \times q}$ for $k = 0, 1, \ldots, N-1$ to be applied in the time-varying estimator (\ref{eq:multirate_estimator_general}). Due to the cyclic structure of $\check{A}$ in (\ref{eq:cyclic_A}), let $\check{L}_{(i),(j)}$ denote the $n \times q$ block at block-row $i$ and block-column $j$ of $\check{L}$. Then:
\begin{align}
L_0 &= \check{L}_{(2),(1)}, \quad L_k = \check{L}_{(k+2),(k+1)} \text{ for } k = 1, \ldots, N-2, \nonumber\\
L_{N-1} &= \check{L}_{(1),(N)}
\end{align}
This indexing reflects the cyclic shift structure where the measurement at time $k$ (located in block $k+1$ of $\check{y}$) affects the state estimate at time $k+1$ (located in block $(k+2) \bmod N$ of $\check{x}$).

The error covariance upper bound satisfies:
\begin{equation}
\check{P}_e \preceq \check{X}^{-1} \preceq \check{W}
\end{equation}

\subsubsection{Stability Guarantee}

\begin{Theorem}[Stability Guarantee via LMI Feasibility]\label{thm:lmi_stability}
If the LMI optimization (\ref{eq:lmi_optimization}) yields a feasible solution with $\check{X} \succ 0$, then the resulting filter gain $\check{L} = -\check{X}^{-1}\check{Y}$ guarantees Schur stability of $\check{A} - \check{L}\check{C}$, regardless of whether $\check{R}$ is positive definite or only positive semidefinite.
\end{Theorem}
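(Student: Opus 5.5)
Let $\check{A}_{cl} := \check{A}-\check{L}\check{C}$ and $\check{P} := \check{X}^{-1}$. The plan is to use $\check{P}$ as a Lyapunov certificate for $\check{A}_{cl}^T$. This suffices because $\check{A}_{cl}$ and its transpose have the same spectrum.

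\textbf{Step 1: reverse the Schur complement.} Since $\check{X}\succeq\epsilon I_{Nn}\succ 0$ by LMI \#2, and the lower-right block $\mathrm{diag}(\check{X}, I_{Nn}, I_{Nq})$ of LMI \#1 is positive definite, the Schur complement lemma turns feasibility of (\ref{eq:lmi_main}) back into (\ref{eq:congruence}). Substituting $\check{Y}=-\check{X}\check{L}$ gives $\check{X}\check{A}+\check{Y}\check{C}=\check{X}\check{A}_{cl}$. Pre- and post-multiplying by $\check{X}^{-1}$, a congruence that preserves the ordering, yields
\begin{equation*}
\check{P} \succeq \check{A}_{cl}\check{P}\check{A}_{cl}^T + \check{Q} + \check{L}\check{R}\check{L}^T .
\end{equation*}
Since $\check{L}\check{R}\check{L}^T\succeq 0$ whether $\check{R}$ is definite or only semidefinite, we may drop it and obtain $\check{P}-\check{A}_{cl}\check{P}\check{A}_{cl}^T\succeq\check{Q}$. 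This is the only point where $\check{R}$ enters, and positive semidefiniteness is all that is used. That is the content of the phrase ``regardless of whether $\check{R}$ is positive definite.''

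\textbf{Step 2: strictness from $\check{Q}$.} By (\ref{eq:Q_cyclic_cov}), $\check{Q}=\mathrm{diag}(Q,\dots,Q)$, and $Q\succ 0$ by the problem formulation. Hence $\check{Q}\succ 0$, and the inequality above is strict: $\check{P}-\check{A}_{cl}\check{P}\check{A}_{cl}^T\succ 0$ with $\check{P}\succ 0$.

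\textbf{Step 3: conclude Schur stability.} Let $v\neq 0$ be a left eigenvector with $v^*\check{A}_{cl}=\lambda v^*$. Multiplying the strict inequality by $v^*$ on the left and $v$ on the right gives
\begin{equation*}
(1-|\lambda|^2)\,v^*\check{P}v = v^*(\check{P}-\check{A}_{cl}\check{P}\check{A}_{cl}^T)v > 0 .
\end{equation*}
Because $v^*\check{P}v>0$, it follows that $|\lambda|<1$. So every eigenvalue of $\check{A}_{cl}$ lies strictly inside the unit disk, which is Schur stability. By Theorem~\ref{stabtheo}, the monodromy matrix $\Phi_N$ of the periodic gains $L_k$ is then Schur stable too, and the error covariance stays bounded.

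\textbf{Main obstacle.} The delicate point is strictness. LMI \#1 is only non-strict ($\succeq 0$), so the whole argument depends on $\check{Q}\succ 0$ supplying the margin in Step 2. If $Q$ were merely semidefinite, I would fall back on a standard detectability-type argument. Assuming $|\lambda|\ge 1$, the inequality would force $v^*\check{Q}^{1/2}=0$, and hence $v^*\check{A}=\lambda v^*$. I would then need a stabilizability assumption on $(\check{A},\check{Q}^{1/2})$ to rule this out. Under the paper's standing assumption $Q\succ 0$, this fallback is unnecessary.
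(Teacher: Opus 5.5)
Your proof is correct. It follows the same skeleton as the paper: Schur complement of LMI \#1, the substitution $\check{Y}=-\check{X}\check{L}$, congruence by $\check{X}^{-1}$, and then a Lyapunov argument. It differs at the one step that actually matters.

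The paper drops \emph{both} noise terms. It then asserts that the non-strict inequality $(\check{A}-\check{L}\check{C})\check{X}^{-1}(\check{A}-\check{L}\check{C})^T \preceq \check{X}^{-1}$ with $\check{X}^{-1}\succ 0$ is necessary and sufficient for Schur stability. That is false: a non-strict Lyapunov inequality only gives $\rho(\check{A}-\check{L}\check{C})\le 1$, for example $\check{A}-\check{L}\check{C}=I$ with $\check{X}=I$. LMI \#1 is posed with $\succeq 0$, and at the trace-optimal point the Lyapunov relation is typically tight, so LMI \#1 is singular there. Strictness therefore has to come from somewhere.

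Your Step~2 supplies it. Keeping $\check{Q}=\mathrm{diag}(Q,\dots,Q)\succ 0$ makes $\check{P}-\check{A}_{cl}\check{P}\check{A}_{cl}^T\succ 0$, and your left-eigenvector computation then gives $|\lambda|<1$. Your argument is thus a rigorous version of the paper's. It also shows that $\check{R}$ enters only through $\check{L}\check{R}\check{L}^T\succeq 0$, while the hypothesis doing the real work is $Q\succ 0$. Incidentally, the inequality the paper writes comes from the Schur complement with respect to the lower-right block $\mathrm{diag}(\check{X},I_{Nn},I_{Nq})$, as in your Step~1, not with respect to the $(1,1)$ block as its text says.

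Your closing remark on semidefinite $Q$ has a small unjustified step, though it is not needed for the theorem as stated. The implication ``$v^*\check{Q}^{1/2}=0$, hence $v^*\check{A}=\lambda v^*$'' also requires $v^*\check{L}\check{C}=0$. This does hold here. For $|\lambda|\ge 1$ the inequality also forces $v^*\check{L}\check{R}^{1/2}=0$. Since $\check{C}$ and $\check{R}^{1/2}$ carry the same selection matrices $S_j$ blockwise and $R^{1/2}$ is invertible, you get $(v^*\check{L})_j S_j=0$ for each block $j$, and hence $v^*\check{L}\check{C}=0$. Spell this out if you keep that remark.
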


\begin{proof}
From the LMI (\ref{eq:lmi_main}) with $\check{X} \succ 0$, applying the Schur complement with respect to the (1,1) block $\check{X}$ yields a $3 \times 3$ block inequality. Its (1,1) block gives:
\begin{equation}
(\check{X}\check{A} + \check{Y}\check{C})\check{X}^{-1}(\check{X}\check{A} + \check{Y}\check{C})^T \preceq \check{X} \label{eq:schur_extract}
\end{equation}
where the noise-related terms have been dropped since they are positive semidefinite. Pre- and post-multiplying (\ref{eq:schur_extract}) by $\check{X}^{-1}$ and substituting $\check{Y} = -\check{X}\check{L}$:
\begin{equation}
(\check{A} - \check{L}\check{C})\check{X}^{-1}(\check{A} - \check{L}\check{C})^T \preceq \check{X}^{-1} \label{eq:lyap_stability}
\end{equation}
This is a discrete Lyapunov inequality with $\check{X}^{-1} \succ 0$, which is a necessary and sufficient condition for $\check{A} - \check{L}\check{C}$ to be Schur stable. The condition depends solely on the closed-loop matrix $\check{A} - \check{L}\check{C}$ and the Lyapunov matrix $\check{X}^{-1}$, and does not involve $\check{R}$.
\end{proof}

\begin{Remark}[Roles of Observability and Semidefinite $\check{R}$]\label{rem:stability_roles}
The rank-deficiency of $\check{R}$ affects the filter \emph{design} computation: the standard DARE (\ref{eq:cyclic_dare}) requires inversion of $\check{C}\check{P}\check{C}^T + \check{R}$, which fails when $\check{R}$ is singular. The LMI formulation circumvents this entirely. Once a feasible LMI solution is obtained, the closed-loop stability follows from Theorem~\ref{thm:lmi_stability} independently of the rank of $\check{R}$. The periodic observability condition (Proposition~\ref{prop:observability}) is the system-theoretic condition that guarantees LMI feasibility with a non-degenerate solution. The constraint $\check{X} \succeq \epsilon I_{Nn}$ (LMI \#2) ensures that the Lyapunov matrix is bounded away from singularity, which is a numerical safeguard for solver robustness.
\end{Remark}

\textbf{Key Advantages of LMI Formulation:}
\begin{enumerate}
\item \textbf{Handles semidefinite $\check{R}$}: Works seamlessly with $\check{R} \succeq 0$
\item \textbf{Convex optimization}: Guarantees global optimality via semidefinite programming
\item \textbf{Multi-objective design}: Easy to incorporate pole placement and $l_2$-induced norm constraints (see Section~\ref{sec:multi_obj})
\item \textbf{Direct gain computation}: The Kalman gain is obtained as $\check{L} = -\check{X}^{-1} \check{Y}$
\item \textbf{Guaranteed stability}: Schur stability is ensured by the Lyapunov inequality structure (Theorem~\ref{thm:lmi_stability})
\end{enumerate}

\textbf{Note:} To validate the proposed LMI solution $ -\check{X}^{-1} \check{Y}$, I conducted numerical experiments with $\check{R} \succ 0$, where the standard DARE solution exists. Let $\check{K}_{\text{Ric}}$ denote the filter gain obtained from the DARE. The predictor-form gain is then $\check{L}_{\text{Ric}} = \check{A} \check{K}_{\text{Ric}}$. Numerical experiments confirm $\|-\check{X}^{-1} \check{Y} - \check{L}_{\text{Ric}}\|_F < 10^{-6}$. 

\begin{Remark}[Relationship to Classical Kalman Filter Optimality]\label{rem:kalman_optimality}
The proposed LMI formulation (\ref{eq:lmi_optimization}) minimizes an upper bound on $\mathrm{trace}(\check{P}_e)$ rather than $\mathrm{trace}(\check{P}_e)$ itself, since the Lyapunov inequality $\check{P}_e \preceq \check{X}^{-1}$ introduces conservatism. However, when the measurement noise covariance is positive definite ($\check{R} \succ 0$), the LMI solution coincides with the classical DARE-based Kalman filter gain, as confirmed both theoretically through the Kalman--LQR duality \cite{anderson1979optimal} and numerically (the above Note). When $\check{R}$ is only positive semidefinite, the standard DARE is not directly applicable because the inversion of $\check{C}\check{P}\check{C}^T + \check{R}$ in (\ref{eq:cyclic_gain}) fails. In this case, the LMI approach provides the best achievable trace upper bound under the Lyapunov inequality framework, representing a principled relaxation of the classical optimality criterion. The basic LMI design in this section can therefore be regarded as an optimal Kalman filter in the sense of upper-bound minimization, which reduces to the classical Kalman filter when the standard conditions are met.
\end{Remark}

\subsection{Weighted Filter Design}\label{sec:weighted}

The standard Kalman filter minimizes trace$(\check{X}^{-1})$, treating all state components equally. In practice, specific states may require higher estimation accuracy; for instance, in automotive navigation, position estimation is often more critical than acceleration estimation. This is achieved by introducing a weighting matrix $\check{\Gamma} \succ 0$ into the LMI constraint (\ref{eq:lmi_bound}):
\begin{equation}
\begin{bmatrix} \check{W} & \check{\Gamma}\\ \check{\Gamma} & \check{X} \end{bmatrix} \succeq 0 \label{eq:weighted_lmi}
\end{equation}
which ensures $\check{W} \succeq \check{\Gamma} \check{X}^{-1} \check{\Gamma}$ via Schur complement. Minimizing trace$(\check{W})$ then yields weighted optimal estimation, where the weight $\gamma_i^2$ is applied to each diagonal element $[\check{X}^{-1}]_{ii}$ for diagonal $\check{\Gamma} = \mathrm{diag}(\gamma_1, \ldots, \gamma_{Nn})$.

\section{Multi-objective Extensions}\label{sec:multi_obj}

\begin{Remark}[Naming Convention for Constrained Designs]\label{rem:luenberger}
The basic LMI design in Section~\ref{lmibased} minimizes trace$(\check{W})$, an upper bound on the estimation error covariance trace, and coincides with the classical Kalman filter when $\check{R} \succ 0$ (see Remark~\ref{rem:kalman_optimality}). The multi-objective extensions in this section retain the same trace minimization objective but restrict the feasible gain set through additional constraints (pole placement, $l_2$-induced norm bound). The resulting designs are therefore referred to as ``Kalman filter with \ldots\ constraint,'' reflecting that the trace minimization criterion of the Kalman filter is preserved while the admissible gain set is constrained. Note that these constrained solutions no longer achieve the unconstrained Kalman filter optimum; they are optimal within their respective constrained feasible sets. When the constraint becomes inactive (e.g., $\bar{r} \to 1$ for pole placement, or $\bar{\gamma} \to \infty$ for the $l_2$-induced norm bound), the solution reduces to the unconstrained optimal Kalman filter of Section~\ref{lmibased}.
\end{Remark}

A key advantage of the LMI framework is its ability to incorporate multiple design objectives simultaneously. This section presents two important extensions: pole placement constraints for convergence rate guarantees and Kalman filter design with $l_2$-induced norm constraints for balancing average and worst-case performance.

\subsubsection{Pole Placement Constraints}

To guarantee a minimum convergence rate for the estimation error, I constrain the eigenvalues of the error dynamics matrix $\check{A} - \check{L}\check{C}$ to lie within a specified region. For discrete-time systems, a common choice is the disk region:
\begin{equation}
\mathcal{D}_{\bar{r}} = \{z \in \mathbb{C} : |z| < \bar{r}\}, \quad 0 < \bar{r} < 1
\label{eq:disk_region}
\end{equation}
The parameter $\bar{r}$ determines the minimum decay rate: smaller $\bar{r}$ implies faster convergence but may require larger gains.

\begin{Theorem}[Pole Placement via LMI \cite{boyd1994lmi}]\label{poletheo}
The eigenvalues of $\check{A} - \check{L}\check{C}$ lie within $\mathcal{D}_{\bar{r}}$ if and only if there exists $\check{X} \succ 0$ such that:
\begin{equation}
(\check{A} - \check{L}\check{C})^T \check{X} (\check{A} - \check{L}\check{C}) - \bar{r}^2 \check{X} \prec 0
\label{eq:pole_lmi}
\end{equation}
\end{Theorem}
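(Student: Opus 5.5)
Write $M = \check{A} - \check{L}\check{C}$, a fixed real $Nn \times Nn$ matrix. The plan is to reduce the statement to the standard discrete Lyapunov characterization of Schur stability by scaling. The eigenvalues of $M$ lie in $\mathcal{D}_{\bar{r}}$ exactly when $\rho(M) < \bar{r}$, that is, exactly when $\tilde{M} = M/\bar{r}$ is Schur stable. Dividing (\ref{eq:pole_lmi}) by $\bar{r}^2 > 0$ turns it into $\tilde{M}^T \check{X} \tilde{M} - \check{X} \prec 0$. So the theorem amounts to: a matrix $\tilde{M}$ is Schur stable if and only if there is $\check{X} \succ 0$ with $\tilde{M}^T \check{X} \tilde{M} - \check{X} \prec 0$. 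I will prove the two directions separately.

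\emph{Sufficiency.} Assume $\check{X} \succ 0$ satisfies the inequality. Take any eigenpair $\tilde{M} v = \lambda v$ with $v \in \mathbb{C}^{Nn}$, $v \neq 0$. Multiply the strict inequality on the left by $v^{*}$ and on the right by $v$. This gives $(|\lambda|^2 - 1)\, v^{*}\check{X} v < 0$. Since $v^{*}\check{X} v > 0$, it follows that $|\lambda| < 1$. Hence every eigenvalue of $M$ satisfies $|z| < \bar{r}$.

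\emph{Necessity.} Assume $\rho(\tilde{M}) < 1$. Define
\[
\check{X} = \sum_{k=0}^{\infty} (\tilde{M}^T)^k \tilde{M}^k .
\]
This series converges because $\|\tilde{M}^k\| \le c\,\sigma^k$ for some $c > 0$ and some $\sigma$ with $\rho(\tilde{M}) < \sigma < 1$, a consequence of Gelfand's formula or the Jordan form. The $k=0$ term is $I$, so $\check{X} \succeq I \succ 0$. Telescoping the series gives $\tilde{M}^T \check{X} \tilde{M} - \check{X} = -I \prec 0$. Multiplying back by $\bar{r}^2$ recovers (\ref{eq:pole_lmi}).

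The main point needing care is the convergence and telescoping in the necessity direction. The geometric bound on $\|\tilde{M}^k\|$ handles it; it also shows that $\check{X}$ is the unique solution of the Lyapunov equation $\tilde{M}^T \check{X} \tilde{M} - \check{X} = -I$.

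I would close with two remarks. First, the $\check{X}$ in this theorem is the Lyapunov matrix for $M^T$. The inequality can equally be stated as $M \check{X}' M^T - \bar{r}^2 \check{X}' \prec 0$, because $M$ and $M^T$ have the same spectrum. That is the form which, after a congruence and the substitution $\check{Y} = -\check{X}\check{L}$ exactly as in LMI \#1, becomes linear in the design variables. Second, the case $\bar{r} = 1$ recovers the Lyapunov argument used in Theorem~\ref{thm:lmi_stability}.
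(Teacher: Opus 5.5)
Your proof is correct. Rescaling to $\tilde{M} = M/\bar{r}$, using the eigenvector test for sufficiency and the series $\sum_{k}(\tilde{M}^T)^k\tilde{M}^k$ for necessity is the standard discrete Lyapunov argument behind this result; the paper does not prove it and simply cites \cite{boyd1994lmi}. One refinement to your closing remark: linearization does not require passing to the dual form $M\check{X}'M^T$, because $\check{X}(\check{A}-\check{L}\check{C}) = \check{X}\check{A}+\check{Y}\check{C}$ makes (\ref{eq:pole_lmi}) equivalent to the paper's LMI (\ref{eq:pole_lmi_design}) directly, via a Schur complement on its $(1,1)$ block $\bar{r}^2\check{X}$, with the same $\check{X}$.
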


Using the variable transformation $\check{Y} = -\check{X}\check{L}$, the Schur complement yields the equivalent LMI:
\begin{equation}
\begin{bmatrix}
\bar{r}^2 \check{X} & \check{X}\check{A} + \check{Y}\check{C} \\
(\check{X}\check{A} + \check{Y}\check{C})^T & \check{X}
\end{bmatrix} \succ 0
\label{eq:pole_lmi_design}
\end{equation}

By Theorem~\ref{poletheo}, this constraint can be added to the optimization problem (\ref{eq:lmi_optimization}) to ensure the desired convergence rate.

\textbf{Design with Pole Constraints:}
\begin{equation}
\begin{aligned}
\minimize_{\check{X}, \check{Y}, \check{W}} \quad & \text{trace}(\check{W}) \\
\text{subject to} \quad & \text{LMI (\ref{eq:lmi_main}), (\ref{eq:lmi_stability}), (\ref{eq:lmi_bound})} \\
& \text{LMI (\ref{eq:pole_lmi_design}) for pole placement}
\end{aligned}
\label{eq:lmi_with_poles}
\end{equation}

\subsubsection{Kalman Filter with $l_2$-induced Norm Constraint}

The standard trace minimization corresponds to optimal Kalman filtering, which minimizes the expected estimation error under Gaussian noise. However, in practice, it is often desirable to also limit the worst-case error amplification. For time-invariant systems, this is characterized by the $H_\infty$ norm; for periodic systems, the appropriate generalization is the $l_2$-induced norm.

\begin{Remark}
For the cyclic system (\ref{eq:cyclic_A})--(\ref{eq:cyclic_C}), although the reformulated system is time-invariant, the underlying periodic structure suggests that the performance criterion is more precisely characterized as the $l_2$-induced norm rather than the $H_\infty$ norm. The LMI-based $l_2$-induced norm analysis and synthesis for periodically time-varying observers via cyclic reformulation have been established in \cite{okajima2019multirate,okajima2024design}.
\end{Remark}

Consider the error dynamics with performance output $z(k) = C_z e(k)$:
\begin{equation}
e(k+1) = (A - L_k S_k C) e(k) + Q^{1/2} d_w(k) - L_k S_k R^{1/2} d_v(k)
\end{equation}
where $d_w(k)$ and $d_v(k)$ are unit-covariance disturbances. Let $d(k) = [d_w(k)^T, d_v(k)^T]^T$ denote the combined disturbance. The $l_2$-induced norm from $d$ to $z$ is defined as:
\begin{equation}
\|G_{d \to z}\|_{l_2/l_2} = \sup_{\|d\|_2 \neq 0} \frac{\|z\|_2}{\|d\|_2}
\end{equation}

\begin{Theorem}[$l_2$-induced Norm via LMI \cite{okajima2019multirate}]\label{l2theo}
Using the variable transformation $\check{Y} = -\check{X}\check{L}$, the $l_2$-induced norm satisfies $\|G_{d \to z}\|_{l_2/l_2} < \gamma$ if there exists $\check{X} \succ 0$ such that:
\begin{equation}
\begin{bmatrix}
\check{X} & \check{X}\check{A} + \check{Y}\check{C} & \check{X}\check{Q}^{1/2} & \check{Y}\check{R}^{1/2} \\
(\check{X}\check{A} + \check{Y}\check{C})^T & \check{X} - \check{C}_z^T \check{C}_z & 0 & 0 \\
  (\check{Q}^{1/2})^T \check{X} & 0 & \gamma^2 I_{Nn} & 0\\
  (\check{R}^{1/2})^T \check{Y}^T & 0 & 0 & \gamma^2 I_{Nq}
\end{bmatrix} \succ 0
\label{eq:l2_lmi_design}
\end{equation}
where $\check{C}_z = \mathrm{diag}(C_z, \ldots, C_z)$ is the cyclic performance output matrix. For full state estimation error ($z = e$), set $\check{C}_z = I_{Nn}$.
\end{Theorem}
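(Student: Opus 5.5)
The plan is to show that the LMI (\ref{eq:l2_lmi_design}) is the discrete-time bounded real lemma for the cyclic error system, written in congruence-transformed variables. I will then read off a dissipation inequality with storage function $V(\check{e}) = \check{e}^T\check{X}\check{e}$ and telescope it.

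First I rewrite the off-diagonal blocks in closed-loop form. Set $\check{A}_{cl} = \check{A} - \check{L}\check{C}$ and $\check{B}_d = [\,\check{Q}^{1/2},\; -\check{L}\check{R}^{1/2}\,]$, so that the cyclic error dynamics (\ref{eq:error_dyn_cyclic}) reads $\check{e}(k+1) = \check{A}_{cl}\check{e}(k) + \check{B}_d\check{d}(k)$, with $\check{z}(k) = \check{C}_z\check{e}(k)$. Substituting $\check{Y} = -\check{X}\check{L}$, the first block row of (\ref{eq:l2_lmi_design}) becomes $\check{X}[\,\check{A}_{cl},\; \check{B}_d\,]$, since $\check{X}\check{A} + \check{Y}\check{C} = \check{X}\check{A}_{cl}$ and $\check{Y}\check{R}^{1/2} = -\check{X}\check{L}\check{R}^{1/2}$.

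Next I take the Schur complement with respect to the $(1,1)$ block $\check{X} \succ 0$. Because the first block row factors as $\check{X}[\check{A}_{cl},\check{B}_d]$, the term $\check{X}^{-1}$ cancels and the LMI is equivalent to
\begin{equation*}
\begin{bmatrix} \check{X} - \check{C}_z^T\check{C}_z & 0 \\ 0 & \gamma^2 I \end{bmatrix} - \begin{bmatrix} \check{A}_{cl}^T \\ \check{B}_d^T \end{bmatrix} \check{X} \begin{bmatrix} \check{A}_{cl} & \check{B}_d \end{bmatrix} \succ 0 .
\end{equation*}
Pre- and post-multiplying by $[\check{e}(k)^T,\ \check{d}(k)^T]$ and its transpose gives, for every nonzero $(\check{e}(k),\check{d}(k))$,
\begin{equation*}
V(\check{e}(k+1)) - V(\check{e}(k)) + \|\check{z}(k)\|^2 - \gamma^2\|\check{d}(k)\|^2 < 0 .
\end{equation*}
Two consequences follow. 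Setting $\check{d}=0$ yields $\check{A}_{cl}^T\check{X}\check{A}_{cl} \prec \check{X} - \check{C}_z^T\check{C}_z \preceq \check{X}$, so $\check{A}_{cl}$ is Schur stable; this is the same argument as Theorem~\ref{thm:lmi_stability}. Then, with zero initial error and any nonzero $\check{d}\in l_2$, summing from $k=0$ to $T$ and using $V \ge 0$ and $V(\check{e}(0))=0$ gives $\sum_{k=0}^{T}\|\check{z}(k)\|^2 \le \gamma^2\sum_{k=0}^{T}\|\check{d}(k)\|^2$. Letting $T\to\infty$, the strict margin (a uniform $\epsilon$ from strict definiteness of the finite-dimensional matrix) yields $\|\check{z}\|_2 < \gamma\|\check{d}\|_2$.

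The step I expect to need the most care is transferring this bound from the cyclic system back to the periodic system $d\mapsto z$. By the Proposition of \cite{bittanti2000invariant}, with $\check{e}(0)$ supported in the first block, the nonzero block of $\check{e}(k)$ is $e(k)$. Each cycled signal has exactly one nonzero block, equal to the original signal, so $\|\check{z}\|_2 = \|z\|_2$ and $\|\check{d}\|_2 = \|d\|_2$. Two points must be checked here. The first is that $\check{C}_z$ is block diagonal, so $\check{z}(k)$ picks up exactly $C_z e(k)$. The second is that the blocks of $\check{L}$ in the $(j+2),(j+1)$ pattern reproduce $L_jS_j$ acting on $d_v(j)$; this uses $\check{R}^{1/2} = \mathrm{diag}(S_jR^{1/2})$, which absorbs the selection matrices. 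Once this identification is made, the supremum over $d$ in the periodic system is bounded by $\gamma$, with strictness inherited from the cyclic inequality. This completes the sufficiency claim.
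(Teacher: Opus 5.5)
Your proof is correct and follows the route the paper relies on. The paper gives no proof of its own: it defers to the cited reference and remarks only that the $(2,2)$ block $\check{X}-\check{C}_z^T\check{C}_z$ comes from the bounded real lemma. Your argument is exactly that lemma: after factoring the first block row as $\check{X}[\check{A}-\check{L}\check{C},\ \check{B}_d]$, the Schur complement on $\check{X}$ gives the dissipation inequality with storage $\check{e}^T\check{X}\check{e}$, and the uniform margin together with the cycled-signal norm identities yields the strict bound for the periodic system.

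The one hypothesis to state explicitly in the transfer step is that $\check{L}$ vanishes outside the $(j+2),(j+1)$ blocks, i.e.\ that it is exactly the cyclic lifting of $\{L_k\}$. This holds by construction when $\check{L}$ is formed from given periodic gains. It is not automatic for $\check{L}=-\check{X}^{-1}\check{Y}$ obtained from unstructured $\check{X},\check{Y}$; there one would impose a block-diagonal $\check{X}$ and a cyclically patterned $\check{Y}$ to guarantee it.
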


\begin{Remark}[Scaling for Feasibility]
The LMI (\ref{eq:l2_lmi_design}) admits a scaling degree of freedom: replacing $\check{C}_z$ with $\alpha \check{C}_z$ and $\gamma^2$ with $\alpha^2\gamma^2$ for any $\alpha > 0$ yields an equivalent $l_2$-induced norm condition. This scaling is useful when the (2,2) block $\check{X} - \check{C}_z^T \check{C}_z$ causes infeasibility; choosing $\alpha < 1$ relaxes the constraint $\check{X} \succ \check{C}_z^T \check{C}_z$, enabling a common Lyapunov matrix $\check{X}$ to satisfy both the Kalman filter LMI (\ref{eq:lmi_main}) and the $l_2$-induced norm constraint simultaneously.
\end{Remark}

\vspace{0.5em}
\noindent\textbf{Kalman Filter with $l_2$-induced Norm Constraint:}

Based on Theorem~\ref{l2theo}, the mixed design problem minimizes the estimation error covariance subject to an $l_2$-induced norm constraint:
\begin{equation}
\begin{aligned}
\minimize_{\check{X}, \check{Y}, \check{W}} \quad & \text{trace}(\check{W}) \\
\text{subject to} \quad & \text{LMI (\ref{eq:lmi_main}), (\ref{eq:lmi_stability}), (\ref{eq:lmi_bound})} \\
& \text{LMI (\ref{eq:l2_lmi_design}) with } \|G_{d \to z}\|_{l_2/l_2} < \bar{\gamma}
\end{aligned}
\label{eq:kf_with_l2_constraint}
\end{equation}
where $\bar{\gamma}$ is a prescribed upper bound on the $l_2$-induced norm.

\begin{Remark}
The (2,2) block $\check{X} - \check{C}_z^T \check{C}_z$ in (\ref{eq:l2_lmi_design}) arises from the bounded real lemma for the $l_2$-induced norm. When $\check{C}_z = I_{Nn}$, this becomes $\check{X} - I_{Nn}$, which requires $\check{X} \succ I_{Nn}$ for feasibility.
\end{Remark}

\subsubsection{Combined Multi-objective Design}

All the above extensions can be combined into a single optimization problem:
\begin{equation}
\begin{aligned}
\minimize_{\check{X}, \check{Y}, \check{W}} \quad & \text{trace}(\check{W}) \\
\text{subject to} \quad & \text{LMI (\ref{eq:lmi_main}): Kalman filter performance} \\
& \check{X} \succeq \epsilon I_{Nn} \quad \text{(positive definiteness)} \\
& \begin{bmatrix} \check{W} & I_{Nn} \\ I_{Nn} & \check{X} \end{bmatrix} \succeq 0 \quad \text{(covariance bound)} \\
& \text{LMI (\ref{eq:pole_lmi_design}): Pole placement in } \mathcal{D}_{\bar{r}} \\
& \text{LMI (\ref{eq:l2_lmi_design}): } \|G_{d \to z}\|_{l_2/l_2} < \bar{\gamma}
\end{aligned}
\label{eq:multi_objective}
\end{equation}

This formulation provides a flexible framework for designing multirate Kalman filters that satisfy multiple performance specifications simultaneously.

\begin{Remark}[Extension to Polytopic Uncertainty]\label{rem:polytopic}
The LMI-based framework has the potential for extension to systems with parametric uncertainty. When the system matrix $A$ belongs to a polytope defined by vertices $\{A^{(1)}, A^{(2)}, \ldots, A^{(M)}\}$, one can pursue a robust filter design by enforcing LMI constraints at each vertex with a common Lyapunov matrix $\check{X}$ and common gain variable $\check{Y}$. However, this extension requires reformulation of the error dynamics, since the nominal cancellation of the control input $u(k)$ in the estimation error equation no longer holds when the system matrix is uncertain. A detailed treatment is left for future work.
\end{Remark}

\subsection{Design Algorithm}

\textbf{Algorithm 1: LMI-based Multirate Kalman Filter Design}
\begin{enumerate}
\item \textbf{System Setup:} Construct cyclic matrices $\check{A}$, $\check{C}$, $\check{Q}$, $\check{R}$ from (\ref{eq:cyclic_A}), (\ref{eq:cyclic_C})

\item \textbf{Preprocessing:} Construct $\check{Q}^{1/2}$ with cyclic structure (place $Q^{1/2}$ in the same pattern as $\check{A}$), and compute $\check{R}^{1/2} = \mathrm{diag}(S_0 R^{1/2}, \ldots, S_{N-1} R^{1/2})$

\item \textbf{Specify Design Objectives:}
\begin{itemize}
\item Basic design: trace minimization only
\item With convergence rate: add pole constraint (\ref{eq:pole_lmi_design}) with specified $\bar{r}$
\item With robustness: add $l_2$-induced norm constraint (\ref{eq:l2_lmi_design}) with bound $\bar{\gamma}$
\item Multi-objective: use combined formulation (\ref{eq:multi_objective})
\end{itemize}

\item \textbf{Solve LMI Optimization:} Using SDP solver (e.g., MOSEK, SeDuMi, SDPT3):
\[
\begin{aligned}
\minimize_{\check{X}, \check{Y}, \check{W}} \quad & \text{trace}(\check{W}) \quad \text{(or multi-objective cost)}\\
\text{subject to} \quad & \text{Selected LMI constraints}
\end{aligned}
\]

\item \textbf{Recover Kalman Gain:}
\[
\check{L} = -\check{X}^{-1} \check{Y}
\]

\item \textbf{Extract Periodic Gains:} Obtain $L_k$, $k = 0, 1, \ldots, N-1$ from $\check{L}$ using cyclic indexing

\item \textbf{Verification:}
\begin{itemize}
\item Stability: $|\lambda_i(\check{A} - \check{L}\check{C})| < 1$ for all $i$
\item Pole placement (if specified): $|\lambda_i(\check{A} - \check{L}\check{C})| < \bar{r}$
\item $l_2$-induced norm bound (if specified): $\|G_{d \to z}\|_{l_2/l_2} < \bar{\gamma}$
\end{itemize}
\end{enumerate}

\textbf{Computational Complexity:}
\begin{itemize}
\item Basic LMI: $O((Nn)^{3.5})$ using interior-point methods
\item Multi-objective: Modest increase due to additional constraints
\item All computations performed offline; online filtering uses fixed periodic gains
\end{itemize}

\section{Numerical Example: Multirate Automotive State Estimation}\label{sec:numerical}

To demonstrate the effectiveness of the proposed multirate Kalman filter design method, I consider an automotive navigation system as a practical application. This example illustrates how the cyclic reformulation framework can be applied to a real-world multirate sensing scenario where GPS and wheel speed sensors operate at different sampling rates.

\subsection{System Configuration}

I consider an automotive navigation system with the following specifications:
\begin{itemize}
\item Sampling time: $\Delta t = 0.1$ s
\item Period: $N = 10$ (GPS at 1 Hz, wheel speed at 10 Hz)
\item State dimension: $n = 3$ (position, velocity, acceleration)
\item Measurement dimension: $q = 2$ (GPS position, wheel speed)
\end{itemize}

The vehicle dynamics are modeled by:
\begin{equation}
x(k) = \begin{bmatrix}
p(k)\\
v(k)\\
a(k)
\end{bmatrix}
\end{equation}
where $p(k)$, $v(k)$, and $a(k)$ represent position, velocity, and acceleration, respectively.

The system matrices are:
\begin{equation}
A = \begin{bmatrix}
1 & 0.1 & 0.005\\
0 & 1 & 0.1\\
0 & 0 & 0.8
\end{bmatrix}, \quad
B = \begin{bmatrix}
0\\
0\\
1
\end{bmatrix}
\end{equation}

\begin{equation}
C = \begin{bmatrix}
1 & 0 & 0\\
0 & 1 & 0
\end{bmatrix}
\end{equation}
where the first row corresponds to GPS position measurement and the second row corresponds to wheel speed velocity measurement.

The noise covariance matrices are:
\begin{equation}
Q = \mathrm{diag}(0.01, 0.1, 0.5), \quad R = \mathrm{diag}(1.0, 0.1)
\end{equation}
representing GPS accuracy of $\pm 1$ m and wheel speed accuracy of $\pm 0.316$ m/s.

The measurement selection matrix $S_k$ is:
\begin{equation}
S_k = \begin{cases}
\mathrm{diag}(1, 1) & \text{if } k \bmod 10 = 0 \text{ (GPS + wheel speed)}\\
\mathrm{diag}(0, 1) & \text{otherwise (wheel speed only)}
\end{cases}
\end{equation}

\subsection{Cyclic System Construction and Semidefinite $\check{R}$}

The cyclic measurement noise covariance $\check{R}$ is constructed according to (\ref{eq:cyclic_R}). The explicit computation yields:
\begin{align}
S_0 R S_0^T &= \begin{bmatrix} 1.0 & 0\\ 0 & 0.1 \end{bmatrix} \quad \text{(full rank)}\\
S_k R S_k^T &= \begin{bmatrix} 0 & 0\\ 0 & 0.1 \end{bmatrix}, \quad k = 1, \ldots, 9 \quad \text{(rank 1)}
\end{align}

\textbf{Verification of Semidefinite Nature:} The cyclic covariance $\check{R} \in \mathbb{R}^{20 \times 20}$ has:
\begin{equation}
\text{rank}(\check{R}) = 11 < 20 = Nq
\end{equation}
This confirms that $\check{R} \succeq 0$ but $\check{R} \not\succ 0$, validating the theoretical analysis in Section \ref{sec:design} and necessitating the LMI-based approach.

The observability matrix of the cyclic system has full rank:
\begin{equation}
\text{rank}(\mathcal{O}) = 30 = Nn
\end{equation}
with condition number $\text{cond}(\mathcal{O}) = 1.11 \times 10^1$, indicating well-conditioned observability.

\subsection{LMI-based Optimal Kalman Filter Design}

The LMI optimization problem (\ref{eq:lmi_optimization}) is solved using MATLAB's Robust Control Toolbox with the dual LQR formulation. The solver converges with:
\begin{equation}
\text{trace}(\check{W}) = 18.07 \quad \text{(upper bound on trace}(\check{X}^{-1})\text{)}
\end{equation}

The closed-loop eigenvalues satisfy:
\begin{equation}
\max_i |\lambda_i(\check{A} - \check{L}\check{C})| = 0.9673 < 1
\end{equation}
confirming stability of the estimation error dynamics. The periodic Kalman gains extracted from $\check{L}$ are shown in Table \ref{tab:gains}.

\begin{table}[t]
\caption{Periodic Kalman Gains (Optimal Kalman Filter)}
\label{tab:gains}
\centering
\begin{tabular}{|c|c|c|}
\hline
$k \bmod 10$ & Sensors Active & $L_k$ \\
\hline
0 & GPS + Wheel & $\begin{bmatrix}
0.2827 & 0.1017\\
0.0042 & 0.6979\\
0.0062 & 0.3755
\end{bmatrix}$ \\
\hline
1 & Wheel only & $\begin{bmatrix}
0 & 0.1094\\
0 & 0.6980\\
0 & 0.3757
\end{bmatrix}$ \\
\hline
5 & Wheel only & $\begin{bmatrix}
0 & 0.1148\\
0 & 0.6981\\
0 & 0.3758
\end{bmatrix}$ \\
\hline
\end{tabular}
\end{table}

When only wheel speed measurements are available ($k \bmod 10 \neq 0$), the first column of $L_k$ is zero, reflecting the absence of GPS measurements. The gains exhibit periodic variation, with larger values at GPS update times.

\subsection{Simulation Results}

\begin{figure}[!b]
\centering
\includegraphics[width=0.8\columnwidth]{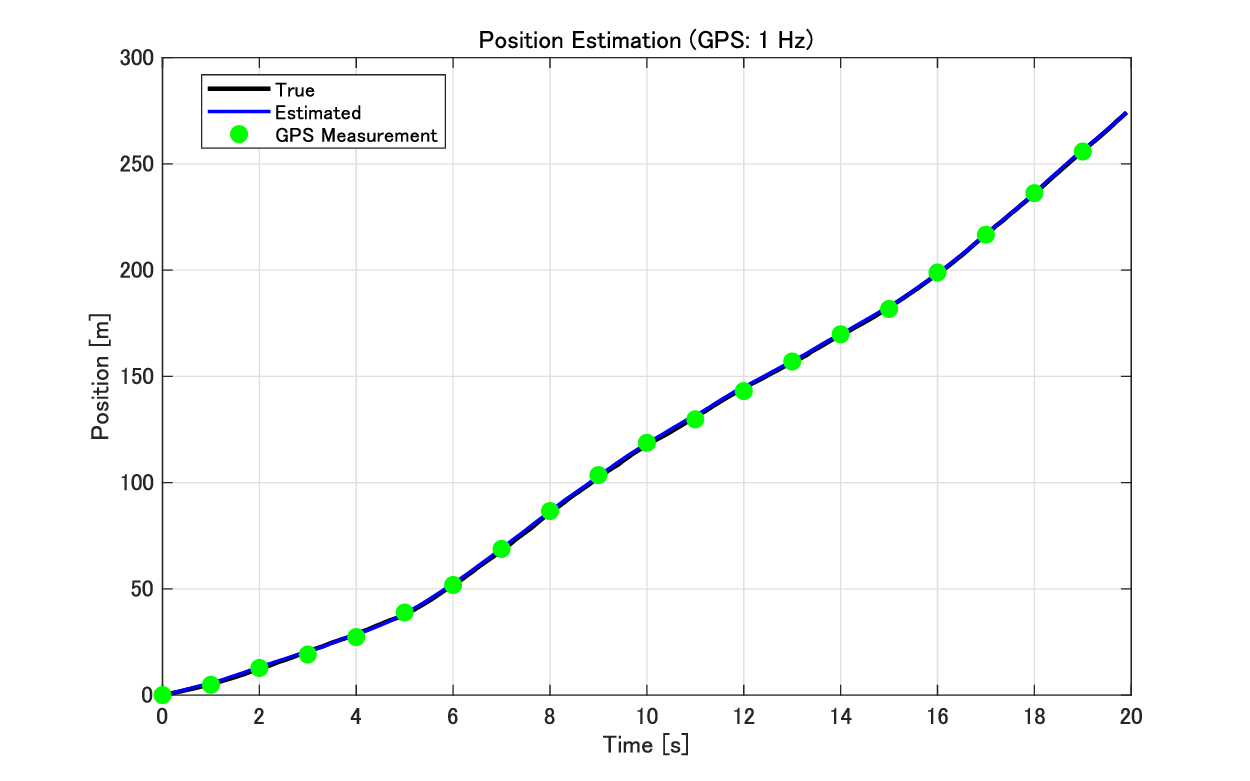}
\caption{Position estimation results. The green circles indicate GPS measurements available at 1 Hz.}
\label{fig:position}
\end{figure}
\begin{figure}[!b]
\centering
\includegraphics[width=0.8\columnwidth]{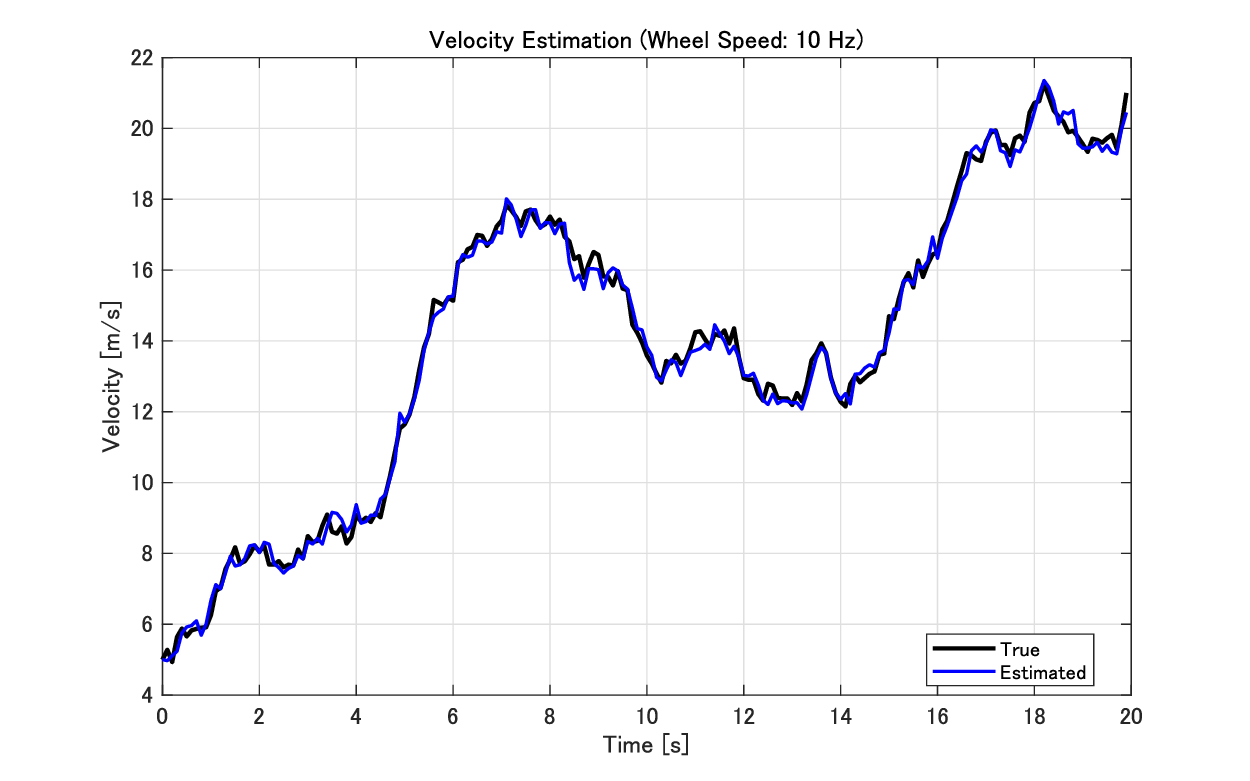}
\caption{Velocity estimation results. Wheel speed measurements are available at 10 Hz.}
\label{fig:velocity}
\end{figure}

\begin{figure}[!b]
\centering
\includegraphics[width=0.8\columnwidth]{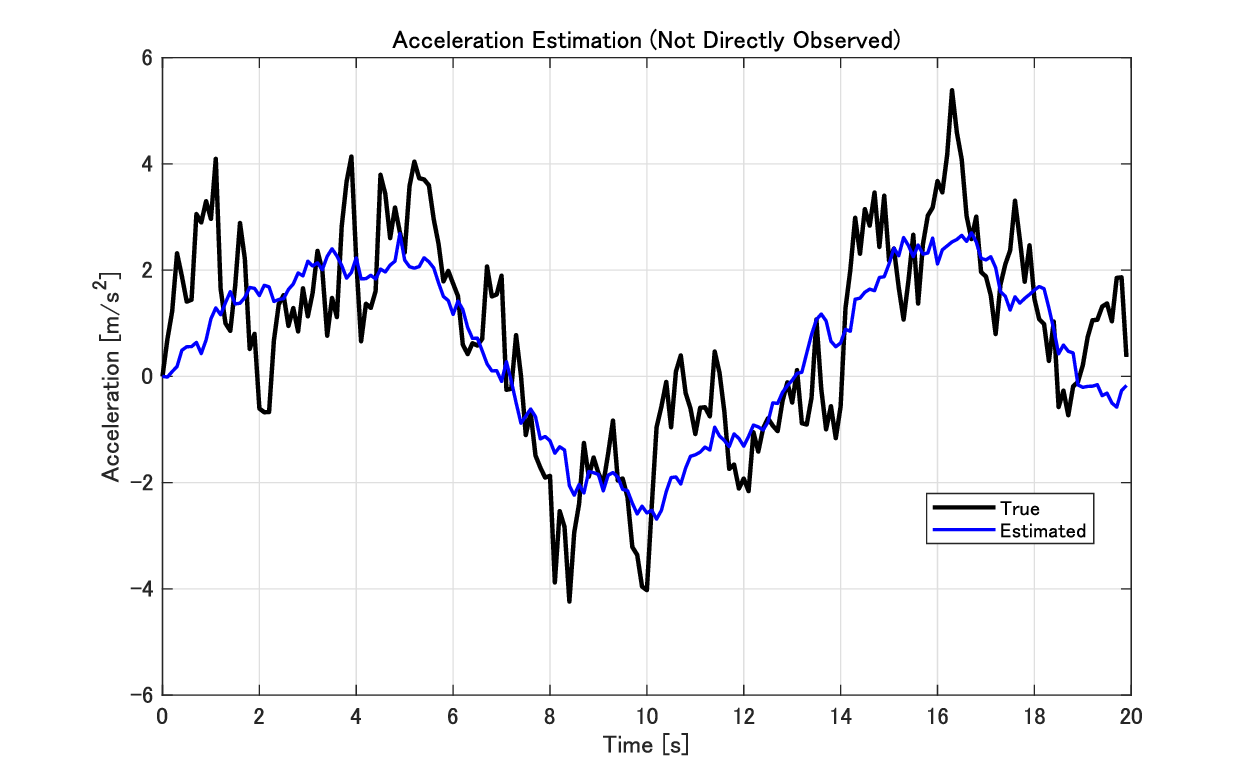}
\caption{Acceleration estimation results. Acceleration is not directly observed but estimated from the dynamics.}
\label{fig:acceleration}
\end{figure}

A simulation of 200 time steps (20 seconds) was conducted with initial state $x(0) = [0, 5, 0]^T$ (position 0~m, velocity 5~m/s), control input $u(k) = 0.5\sin(0.05k)$ (sinusoidal acceleration), and process and measurement noise drawn from zero-mean Gaussian distributions with covariances $Q$ and $R$.

Figs.~\ref{fig:position}--\ref{fig:acceleration} show a representative estimation trajectory for position, velocity, and acceleration, respectively. Position estimation errors are largest just before GPS updates (Fig.~\ref{fig:position}) and are corrected sharply when GPS measurements become available.

To statistically validate the filter performance, Monte Carlo simulations with 500 independent noise realizations were conducted. Each run uses the same system parameters and initial conditions, with independently generated process and measurement noise sequences. The RMSE for each run is computed over the steady-state interval (time steps 51--200), discarding the first 50 steps as transient. This cutoff is based on the maximum closed-loop eigenvalue magnitude of 0.967, corresponding to a time constant of approximately 30 steps.

\begin{table}[t]
\caption{Monte Carlo Simulation Results (500 runs, steady-state evaluation)}
\label{tab:monte_carlo}
\centering
\small
\begin{tabular}{|l|c|c|c|c|}
\hline
State & RMSE & 95\% Pctl & Theory UB & Ratio \\
\hline
Position & 0.561 m & [0.317, 0.919] & 0.587 m & 1.046 \\
Velocity & 0.431 m/s & [0.372, 0.490] & 0.432 m/s & 1.002 \\
Accel. & 1.125 m/s$^2$ & [0.889, 1.380] & 1.130 m/s$^2$ & 1.005 \\
\hline
\end{tabular}
\end{table}

Table~\ref{tab:monte_carlo} summarizes the Monte Carlo results. The column ``Theory UB'' lists the theoretical steady-state RMSE for each state, and the column ``Ratio'' is defined as Theory UB divided by the Monte Carlo mean RMSE. Because the cyclic error covariance upper bound $\check{X}^{-1} \in \mathbb{R}^{Nn \times Nn}$ contains $N$ diagonal blocks of size $n \times n$, one for each time step within the frame period, the theoretical RMSE for state $j$ is computed by averaging the corresponding diagonal entries over the $N$ time steps:
\begin{equation}
\text{Theory UB}_j = \sqrt{\frac{1}{N}\sum_{k=0}^{N-1}[\check{X}^{-1}]_{(kn+j,\, kn+j)}} \label{eq:theory_ub}
\end{equation}
This period-averaged value provides a fair comparison with the Monte Carlo mean RMSE, which is also computed over all time steps in the steady-state interval. Since the LMI formulation guarantees $\check{P}_e \preceq \check{X}^{-1}$ (see (\ref{eq:lmi_bound})), the theoretical values are guaranteed to be no less than the true steady-state RMSE. The Monte Carlo mean RMSE converges to this true value as the number of runs increases; therefore, the ratio exceeding unity for all three states confirms that the LMI solution provides a valid and consistent upper bound. The ratios are close to unity (1.002--1.046), indicating that the bound is tight with low conservatism.

Regarding estimation accuracy, the position RMSE (0.561~m) is well below the GPS noise standard deviation (1.0~m), confirming that multirate sensor fusion effectively reduces position estimation error. The velocity RMSE (0.431~m/s) exceeds the wheel speed sensor noise standard deviation (0.316~m/s); this is expected because the dominant error source for velocity estimation is not measurement noise but the propagation of acceleration process noise ($Q_{33} = 0.5$), which is consistent with the theoretical covariance analysis.

\subsection{Multi-objective Design Results}

To demonstrate the multi-objective design capabilities of the LMI framework, I present results for both Kalman filter with pole placement and Kalman filter with $l_2$-induced norm constraint.

\subsubsection{Kalman Filter with Pole Placement}

Using the pole placement constraint (\ref{eq:pole_lmi_design}), I investigate the trade-off between Kalman filter performance (trace minimization) and convergence rate (eigenvalue constraint $|\lambda| < \bar{r}$). Table \ref{tab:pole_tradeoff} and Fig.~\ref{fig:pole_tradeoff} summarize the results.

Compared to the optimal Kalman filter (trace$(\check{W}) = 18.07$), even a mild constraint $\bar{r} = 0.975$ increases trace$(\check{W})$ to 19.64. The trade-off is highly nonlinear: moderate constraints ($\bar{r} \approx 0.90$) roughly double the cost, while aggressive constraints ($\bar{r} < 0.80$) increase it by more than an order of magnitude.

\subsubsection{Kalman Filter with $l_2$-induced Norm Constraint}

Using the $l_2$-induced norm constraint (\ref{eq:l2_lmi_design}), I investigate the trade-off between average performance (trace minimization) and worst-case robustness ($l_2$-induced norm). $C_z = \sqrt{0.1}I_{n}$ is selected. The minimum achievable $l_2$-induced norm is $\gamma_{\text{opt}} = 1.0214$.

Table \ref{tab:l2_tradeoff} and Fig.~\ref{fig:l2_tradeoff} show the trade-off results.  $\gamma_{\text{opt}}$ is the result of $l_2$-induced norm optimization \cite{okajima2019multirate}. As $\bar{\gamma}$ approaches $\gamma_{\text{opt}}$, trace$(\check{W})$ increases from 18.71 (at $\bar{\gamma}/\gamma_{\text{opt}} = 10$) to 34.65 (at $\bar{\gamma}/\gamma_{\text{opt}} = 1.01$), while the maximum eigenvalue magnitude decreases from 0.961 to 0.815, indicating improved stability margins at the cost of estimation performance.

\subsubsection{Comparison of Optimal Gains}

Table \ref{tab:gain_comparison} compares the periodic Kalman gains $L_0$ (at GPS + wheel speed update) for the optimal Kalman filter and the optimal $l_2$-induced norm design~\cite{okajima2023multirate}. The $l_2$-optimal design yields a position gain of $1.000$ (compared to $0.283$ for the Kalman filter), indicating that worst-case robustness optimization prioritizes GPS measurements more aggressively, while the velocity and acceleration gains also increase to achieve a smaller maximum eigenvalue magnitude ($0.653$ vs.\ $0.967$).

\begin{table}[!t]
\caption{Kalman Filter with Pole Placement: Trade-off}
\label{tab:pole_tradeoff}
\centering
\begin{tabular}{|c|c|c|}
\hline
$\bar{r}$ & trace$(\check{W})$ & max$|\lambda|$ \\
\hline
0.975 & 19.64 & 0.953 \\
0.950 & 24.91 & 0.910 \\
0.925 & 31.73 & 0.868 \\
0.900 & 41.19 & 0.826 \\
0.875 & 55.10 & 0.785 \\
0.850 & 76.45 & 0.746 \\
0.825 & 110.4 & 0.707 \\
0.800 & 165.9 & 0.707 \\
0.775 & 259.4 & 0.705 \\
0.750 & 422.1 & 0.684 \\
\hline
\end{tabular}
\end{table}

\begin{figure}[!t]
\centering
\includegraphics[width=0.8\columnwidth]{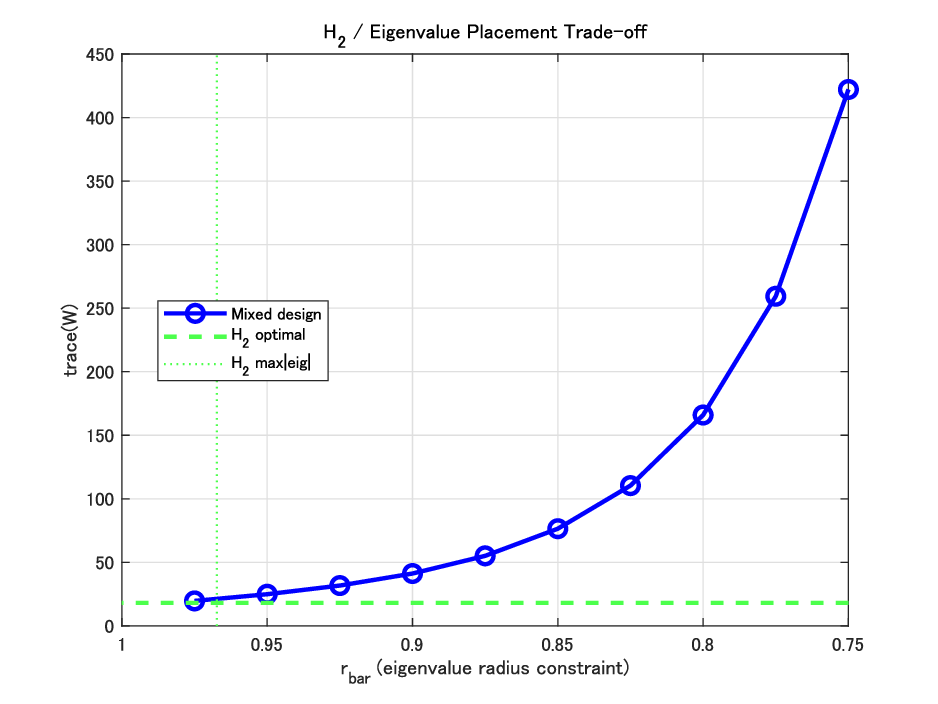}
\caption{Kalman filter with pole placement: trace$(\check{W})$ versus pole constraint $\bar{r}$.}
\label{fig:pole_tradeoff}
\end{figure}

\begin{table}[!t]
\caption{Kalman Filter with $l_2$-induced Norm Constraint: Trade-off}
\label{tab:l2_tradeoff}
\centering
\begin{tabular}{|c|c|c|c|}
\hline
$\bar{\gamma}/\gamma_{\text{opt}}$ & $\bar{\gamma}$ & trace$(\check{W})$ & max$|\lambda|$ \\
\hline
10.0 & 10.21 & 18.71 & 0.961 \\
5.0 & 5.11 & 19.12 & 0.957 \\
3.0 & 3.06 & 19.99 & 0.949 \\
2.0 & 2.04 & 21.40 & 0.936 \\
1.5 & 1.53 & 23.08 & 0.921 \\
1.3 & 1.33 & 24.25 & 0.911 \\
1.2 & 1.23 & 25.04 & 0.905 \\
1.1 & 1.12 & 26.10 & 0.897 \\
1.05 & 1.07 & 27.63 & 0.892 \\
1.01 & 1.03 & 34.65 & 0.815 \\
\hline
\end{tabular}
\end{table}

\begin{figure}[!t]
\centering
\includegraphics[width=0.8\columnwidth]{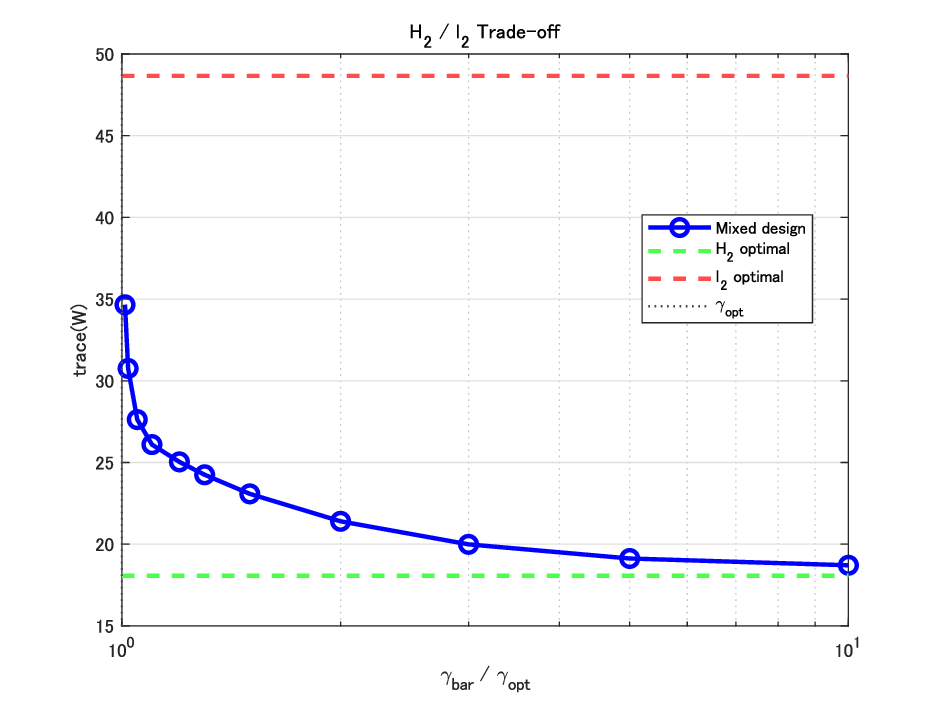}
\caption{Kalman filter with $l_2$-induced norm constraint: trace$(\check{W})$ and maximum eigenvalue magnitude vs. $\bar{\gamma}/\gamma_{\text{opt}}$.}
\label{fig:l2_tradeoff}
\end{figure}

\subsection{Design Guidelines}

Based on the numerical results, the following design strategy is recommended. For standard applications, the optimal Kalman filter design (Section~\ref{lmibased}) provides the best average estimation performance under stochastic disturbances characterized by the noise covariances $Q$ and $R$. When fast convergence is required, adding a pole placement constraint with $\bar{r} = 0.7$--$0.8$ guarantees a minimum decay rate at the cost of a moderate increase in the estimation error covariance. When limiting worst-case performance degradation is critical, the $l_2$-induced norm constraint with $\bar{\gamma} = 1.5$--$2.0 \times \gamma_{\text{opt}}$ bounds the worst-case gain from disturbances $(\check{d}_w, \check{d}_v)$ to the estimation error. When multiple requirements must be satisfied simultaneously, the combined formulation (\ref{eq:multi_objective}) provides a systematic means of balancing all criteria within a single convex optimization. For robustness against uncertain system matrices, the polytopic uncertainty extension discussed in Remark~\ref{rem:polytopic} is the appropriate approach.

The LMI-based approach provides a unified framework for incorporating these diverse requirements within a single convex optimization problem, enabling systematic design of multirate Kalman filters tailored to specific application needs. The MATLAB and Python implementation of the proposed LMI-based multirate Kalman filter design is available at \cite{okajima2025github}.

\begin{table}[t]
\caption{Comparison of $L_0$ for Different Design Criteria}
\label{tab:gain_comparison}
\centering
\small
\begin{tabular}{|c|c|c|}
\hline
Design & $L_0$ & max$|\lambda|$ \\
\hline
Optimal Kalman filter & $\begin{bmatrix}
0.283 & 0.102\\
0.004 & 0.698\\
0.006 & 0.376
\end{bmatrix}$ & 0.967 \\
\hline
Optimal $l_2$-induced norm \cite{okajima2023multirate} & $\begin{bmatrix}
1.000 & 0.086\\
0.002 & 0.787\\
0.006 & 0.558
\end{bmatrix}$ & 0.653 \\
\hline
\end{tabular}
\end{table}

\section{Conclusion}\label{sec:conclusion}

This paper has presented an LMI-based design framework for multirate steady-state Kalman filters using cyclic reformulation. In such systems, the Kalman gains converge to periodic steady-state values that repeat every frame period. The central finding is that the cyclic measurement noise covariance $\check{R}$ becomes positive semidefinite---not positive definite---when sensors operate at different rates, preventing direct application of standard DARE-based methods. The proposed dual LQR formulation with LMI optimization naturally handles this semidefinite structure while enabling optimal estimation error covariance minimization and multi-objective extensions including pole placement and $l_2$-induced norm constrained design.

Numerical validation using state estimation in an automotive navigation system with GPS (1~Hz) and wheel speed (10~Hz) sensors, based on Monte Carlo simulation with 500 independent noise realizations, confirmed that the proposed filter achieves a mean position RMSE of 0.561~m, well below the GPS noise standard deviation of 1.0~m (ratio 0.56). The theoretical steady-state RMSE values, obtained by period-averaging the diagonal elements of the $Nn$-dimensional error covariance upper bound $\check{X}^{-1}$ from the LMI solution, exceed the Monte Carlo averages for all states (ratio 1.002--1.046), confirming both the validity and the tightness of the LMI upper bound. The trade-off analysis demonstrated that the LMI framework provides systematic tuning between convergence rate and estimation accuracy.

The framework is applicable in principle to general linear time-invariant systems with deterministic, periodic measurement patterns, where the sensor availability schedule is known a priori and repeats every frame period. The numerical example in this paper serves as a proof-of-concept demonstration with a 3-state system; validation with higher-dimensional systems is an important direction for future work. Other promising directions include robust filter design under polytopic uncertainty (see Remark~\ref{rem:polytopic} in Section~\ref{sec:multi_obj}), generalization to periodically repeating non-uniform sampling intervals, extensions to descriptor systems \cite{nikoukhah1992descriptor} and systems with finite delay, sensor configuration switching strategy design guided by the LMI performance metrics, and extensions to nonlinear systems via extended Kalman filtering.

\section*{Acknowledgment}
The author acknowledges the use of Claude (Anthropic) for improving 
the readability and language clarity of the manuscript, and for 
technical verification of mathematical expressions. Specifically, 
Claude was used to refine the English expression in the Introduction and Literature Review, and to check dimensional 
consistency and numerical accuracy of equations. All technical content, 
research methodology, results, and conclusions are entirely the 
author's own work.

\end{document}